\renewcommand{\epsilon}{\varepsilon}
\renewcommand{\rho}{\varrho}
\renewcommand{\phi}{\varphi}
\renewcommand{\tilde}{\widetilde}
\renewcommand{\bar}[1]{\mkern
  1.5mu\overline{\mkern-1.5mu#1\mkern-1.5mu}\mkern 1.5mu}
\renewcommand{\leq}{\leqslant}
\renewcommand{\geq}{\geqslant}
\newcommand{\E}{{\mathbb E}}
\newcommand{\I}{{\mathbb I}}
\newcommand{\clS}{{\mathcal S}}
\newcommand{\nn}{\nonumber}
\renewcommand{\t}{\theta}
\newcommand{\bt}{\bm{\theta}}
\renewcommand{\a}{\alpha}
\newcommand{\G}{\Gamma}
\newcommand{\bw}{\bm{w}}
\newcommand{\bart}{\bar{\t}}
\newcommand{\beq}{\begin{equation}}
\newcommand{\eeq}{\end{equation}}
\newcommand{\bti}{\bt_{-i}}
\newcommand{\arule}{x}
\numberwithin{equation}{section}
\begin{document}
%\markboth{}{}
\lhead{} % Do not edit (Gerry Murray - Dec. 2012)
\rhead{} % Do not edit  (Gerry Murray - Dec. 2012)
\lfoot{} % Do not edit  (Gerry Murray - Dec. 2012)
\cfoot{} % Do not edit  (Gerry Murray - Dec. 2012)
\rfoot{} % Do not edit  (Gerry Murray - Dec. 2012)
\chead{\bfseries Proceedings Article}  % Do not edit (Gerry Murray - Dec. 2012)
\title{Ranking and Tradeoffs in Sponsored Search Auctions}
\author{BEN ROBERTS \affil{Microsoft Research Cambridge}
DINAN GUNAWARDENA \affil{Microsoft Research Cambridge}
IAN A. KASH \affil{Microsoft Research Cambridge}
PETER KEY\affil{Microsoft Research Cambridge}}

\begin{abstract}
In a sponsored search auction, decisions about how to rank ads impose
tradeoffs between objectives such as revenue and welfare. In this
paper, we examine how these tradeoffs should be made.  We begin by
arguing that the most natural solution concept to evaluate these
tradeoffs is the lowest symmetric Nash equilibrium
(SNE). As part of this argument, we generalise the well known connection
between the lowest SNE and the VCG outcome.
We then propose a new ranking algorithm, loosely based on the revenue-optimal
auction, that uses a reserve price to order the ads (not just to
filter them) and give conditions under which it raises
more revenue than simply applying that reserve price. 
Finally, we conduct extensive simulations examining the tradeoffs enabled by
different ranking algorithms and show that our proposed algorithm enables
superior operating points by a variety of metrics.
%
%In a sponsored search auction, instead of allocating the most valuable
%slot to the highest bidder, general practice is to order advertisers
%according to some \textit{ranking system} that is dependent upon
%estimated probabilities of each individual ad being clicked.
%We propose a new ranking system, loosely based on
%Myerson's optimal auction design, that  incorporates
%a reserve price directly  into the ranking function itself.  We provide both
%theoretical and empirical evidence supporting its effectiveness, with
%comparisons to rankings commonly employed in practice. We find that
%our proposed ranking system not only generates greater revenue than
%alternative rankings, but also does so incurring smaller losses to
%performance measures such as efficiency and total number of
%clicks.
\end{abstract}

\category{J.4}{Computer Applications}{Social and Behavioral Sciences - Economics}

\terms{Economics, Theory, Experimentation}

\keywords{Sponsored search, keyword auctions, generalized second price, reserve price}

%\acmformat{Zhou, G., Wu, Y., Yan, T., He, T., Huang, C., Stankovic,
%J. A., and Abdelzaher, T. F.  2010. A multifrequency MAC specially
%designed for  wireless sensor network applications.}

\begin{bottomstuff}
\end{bottomstuff}

% Do NOT change these OR their 'location' -- Gerry Murray - December 2012
\maketitle
\thispagestyle{fancy}
% Do NOT change these OR their 'location' -- Gerry Murray - December 2012

\section{Introduction}

In a sponsored search auction, the mechanism designer is given a search query from a user and
bids from advertisers and faces the problem of
deciding which ads to show in response and where to place them. His
choices govern the balance achieved between the interests of stakeholders: the user wants meaningful content, the
publisher wants revenue, and the advertisers want engagement.
In current sponsored search
auctions run by Google or Bing, ads' placement is determined by a ranking algorithm.  This ranking then determines prices through the payment rule of the generalised
second price (GSP) auction.  Payment is made when a user clicks on an ad (pay-per-click).  Hence an ad's ranking affects the user, both through a direct position effect and an indirect externality on non-sponsored or algorithmic content, has a consequent affect on the advertiser through the probability of engaging with or clicking on the ad, and affects the price paid by the advertiser and revenue generated to the publisher through the pricing mechanism.  Higher positions typically receive more attention from users and more clicks.   Different ranking algorithms enable different tradeoffs to be made among these stakeholders.  In this paper, we examine the tradeoffs enabled by different ranking algorithms and use a combination of theoretical analysis and empirical study to gain insights into new ranking methods.

%Given a list of advertisers who have bid on a query, a sponsored search auction designer faces the problem of determining in what order their ads should appear on the page.  This choice of ranking algorithm affects the value created for advertisers; the value or cost users experience from being shown the ads; and, through the pricing rule used in the generalised second price (GSP) auction, the revenue for the publisher.  Different ranking algorithms enable different tradeoffs to be made among these stakeholders.  In this paper, we explore the tradeoffs enabled by different ranking algorithms.

 Historically, Yahoo! initially ranked advertisers by bid  $\{b_i\}$, then Google adopted a ranking based on value per impression, or expected revenue, namely ranking by $\{b_iw_i\}$, where the
\textit{relevance} $w_i$ is an estimate of the likelihood that
advertiser $i$'s ad will be clicked.   Since then, there has been a
widespread adoption of this rank-by-expected revenue approach,  subject to various enhancements  such as setting reserves prices.
%, while the ranking mechanism itself continues to evolve.

The framework for much of our paper is a single instance of an auction
where advertisers bid for slots, which differ in their ``quality''
(such as click-through rate).  This is an abstraction of live
ad-auctions, where advertisers bid on keywords as part of a campaign,
and keywords are matched to queries. We sidestep the matching
issues and assume that the bids represent bids for the actual query,
which is indeed the case for exact match bids on specific keywords.   In
the analysis phase, we assume many of the auction parameters are
fixed, if unknown: for example quality scores, click-through rates and
the number of bidders.    Later in the paper we show simulation
results using live data, where these assumptions certainly do not
hold.

In order to examine the necessary tradeoffs, we need a solution concept that describes the outcome we expect.  The standard analysis of GSP auctions~\cite{aggarwal06,edelman,varian} looks at complete information Nash equilibria,  and in particular their refinement to {\em symmetric} or {\em locally envy-free} Nash equilibria (SNE).  However, the standard approach to analyzing the revenue of auction designs, based on Myerson's work on optimal auctions~\cite{myerson}, examines performance in Bayes-Nash equilibria.

Fortunately, it turns out that we do not need to choose between these two solution concepts.  In a striking result, several groups of authors independently showed that, when ads are ranked by their bid multiplied by their click probability, the ``lowest'' SNE corresponds to the the VCG outcome~\cite{aggarwal06,edelman,varian}.  Aggarwal et al. showed that this continues to hold when these ``rank scores'' are multiplied by individualised weights, except that rather than VCG the results correspond to the outcome of a truthful mechanism they call the ``laddered auction.''  As this is a single parameter domain, this mechanism is a special case of Myerson's general technique for transforming a monotone allocation rule into a truthful mechanism~\cite{myerson}.

Our first result futher generalises the connection between the Myerson
outcome and the lowest SNE of the GSP auction. We show that the it
holds for a broad class of ranking algorithms, which is large enough
to to include almost all of those that have been previously considered
in the literature.  We discuss how an argument due to Edelman and
Schwartz~\shortcite{edelman2} provides additional justification for
our adoption of the lowest SNE as our solution concept; it also
enables us to bound the performance of ranking algorithms outside our
class, where SNE may not exist. A notable algorithm of this type is the practically
important case of simply imposing a minimum bid on the
rank-by-expected revenue system.

Our second contribution is a proposal of a new ranking algorithm,
inspired by features of the revenue-optimal auction, with provably
good properties.  Rather than using a reserve price simply as a
minimum bid, we  incorporate it directly into the ranking algorithm
such that ads with bids near the reserve price receive low rank scores
relative to ads with high bids but lower click probabilities. That is,
we  recommend the class of ranking algorithms $\{(b_i-r)w_i\}$, where $r$
is the reserve price.  This change from $\{b_iw_i\}$ achieves a similar
``squashing'' effect to introducing an exponent $\a<1$ and  ranking by $\{b_iw_i^\a\}$, proposed by Lahaie and Pennock~\shortcite{lahaie}.  Hence  both squashing and
setting a reserve are achieved through a single parameter, as
opposed to the effects being decoupled into a squashing exponent
and a reserve.

We prove that, for sufficiently small reserve prices, incorporating the
reserve price into the ranking algorithm raises more revenue than simply
using that same reserve price purely as a filter (i.e.\ the additional revenue is due the change in ordering, not merely the introduction of a reserve price).  The meaning of sufficiently small depends on the distribution of advertiser valuations, but for a number of natural distributions (e.g. uniform and exponential) it encompasses all choices of reserve price that do not exceed the revenue-optimal reserve.
While our ranking is inspired purely by the revenue-optimal auction, this theorem provides some insight into why it may enable favourable tradeoffs: to raise a given amount of revenue, a lower (and thus less distortionary) reserve price can be used.

Our third contribution is an extensive consideration of the tradeoffs
enabled by different ranking algorithms through simulations.  As the
auctioneer cares about both short-term revenue and the long-run health
of the search platform, the relevant operating points likely
do not include the welfare-optimal and revenue-optimal designs.  This
is in contrast to previous work, which tends to focus on performance
only at these extreme points.  We use two types of simulation: one
simulating a reactive environment where users react to changes in
auction design by forming revised equilibria; the other using real
auction logs which assumes advertisers do not react to
changes, but which captures all the vagaries of real auctions. These include budget
constraints, a changing set of advertisers, and  stochastic quality factors.    These simulations 
show that our proposed ranking algorithm enables favourable tradeoffs
using a variety of metrics.  Our results also show that our proposed
ranking
%has good properties in terms of short-term revenue before advertisers adjust towards the new equilibrium and
has several nice properties from an optimisation perspective.

\section{Preliminaries}
\label{s-model}

We adopt a standard (Bayesian) model of a GSP auction:
\begin{itemize}
\item There are $n$ advertisers (bidders) and $m$ slots.
\item If bidder $i$'s ad is displayed in slot $k$, its
 \textit{click-through rate}
  (CTR) is $w_is_k$. $s_k$ is a slot effect, while
  $w_i$ is an ad effect and can be interpreted as the
  \textit{relevance} of bidder $i$'s ad. The slots are strictly
  heterogeneous, with effects $s_1> s_2>\cdots$.
\item Advertiser $i$ has value $\t_i$ for a click. Values are
  i.i.d.\ with cdf $F(\t_i)$ and pdf $f(\t_i)$.
\item Advertisers are assigned to slots by a ranking algorithm.
  This can be represented by a \emph{ranking function} $y(b,w)\geq 0$.  The advertisers
  are sorted by $y(b_i,w_i)$ with the highest score receiving the first
  slot.  Advertisers with $y(b_i,w_i) = 0$ are excluded (e.g.\ if they
  are below some reserve).
%An advertiser who bids zero cannot bevassigned a slot: $y(0,w)=0$.
We restrict $y$ to be a monotone function
  with respect to $b$, but not necessarily $w$.
\item Advertisers pay the {\em generalised second price} for their slot.  Assuming for simplicity
  that advertisers are ordered such that advertiser $i$ is in slot $i$, advertiser $i$'s payment is
  the minimum bid needed to keep his slot
\beq\label{e-GSPpi}
p_i^y(b_{i+1},w_{i+1};w_i) = \inf\bigl\{b\,:\,y(b,w_i)>y(b_{i+1},w_{i+1})\bigr\}\,.
\eeq
\item We ignore the possibility of non-trivial ties (i.e.\
  $y(b_i,w_i)=y(b_j,w_j)>0$) as they complicate analysis, and
it is not clear how ties should be resolved in a GSP auction. Our
analysis focuses on performance \emph{in expectation}, and so we justify this
oversight by noting that for all our considered ranking functions,
non-trivial ties occur with probability zero and have no bearing on
any expected quantity.
\item In our analysis, it will be helpful to refer to the slot effect $s_k$
  assigned to advertiser~$i$ as his \textit{allocation}
  $\arule_i$. In some instances it is appropriate to consider this a
  function of the realisation of advertiser types: $\arule_i(\bt,\bw)$. If a
  ranking function $y(b,w)$ is used to assign the slots, then it is
  appropriate to consider an advertiser's allocation as a function of
  the bid and relevance vectors, which we write as
  $\arule_i^y(\bm{b},\bw)$.
\item We use $\arule$ (or $\arule^y$) to denote an \emph{allocation rule}, which comprises
  the set of allocation functions $\{\arule_i\}$ (or $\{\arule_i^y\}$).
\end{itemize}
The only new feature of this model is the use of a general ranking algorithm based on $y$ rather
than assuming a particular instantiation. 

Much of our theoretical work utilises \emph{virtual values}, a common
concept in economic theory. Under general conditions, an advertiser's
virtual value may depend on both his true value and his
relevance. However in the interests
 of an easier analysis we assume
independence between these two variables, which defines an
advertiser's virtual value also to be independent of his
relevance:
\beq\label{e-phi}
\phi(\t_i) = \t_i-\frac{1-F(\t_i)}{f(\t_i)}\,.
\eeq
We further assume that the virtual value function is differentiable, and
that the hazard rate $f(\t_i)/(1-F(\t_i))$ is
non-decreasing, conditions which hold for numerous common distributions. Again,
these assumptions are to accommodate an easier analysis. The value at
which an advertiser's virtual value becomes zero is $\bart$. That is,
$\phi(\bart)=0$. Given
that $\t_i$ has distributional support at zero, our assumptions imply
that $\bart$ exists and is unique.

%\section{Equilibrium Concepts}
%\label{s-equilibrium}

\section{Proposed Ranking Algorithm}

In order to compare ranking algorithms, we must make some assumption about
bidder behaviour. A useful starting point is to assume a Bayes--Nash
equilibrium (BNE) in which each advertiser submits a bid maximising his own
benefit in expectation over the others' types and bids, and his own relevance.
%That is, if we denote bidder $i$'s strategy by $b_i=\b_i(\t_i)$, then a BNE
%requires
%\beq\label{e-BNE}
%\b_i(\t_i)\in\argmax_b\biggl\{\E\Bigl[\bigl(\t_i-p_i^y(b,\bm{\b}_{-i}(\bti),\bw)\bigr)w_i\arule_i^y(b,\bm{\b}_{-i}(\bti),\bw)\,\bigl|\bigr.\,\t_i\Bigr]\biggr\}\,\,\text{for all}\,\,i,\t_i\,.
%\eeq
%As a bid of zero yields no rewards or costs, \eqref{e-BNE}
%is necessary and sufficient for a BNE. In this setup, the relevances may or may
%not be random variables. The only requirement is that their
%distributions (or values) are common knowledge.
%
An advantage of working in the Bayesian setting is that we can use
Myerson's~\shortcite{myerson} theory to quickly calculate expected
revenue $R$. In any BNE,
\beq
R=R(\arule)=\E\biggl[\sum_{i=1}^n\phi(\t_i)w_i\arule_i(\bt,\bw)\biggr]\,,\label{e-R}
\eeq
where we write $R(\arule )$ instead of  $R(\arule ;f,\bw)$ to
emphasise the dependence on the allocation rule $\arule$.

Hence using \eqref{e-R}, one can simply characterise the
revenue-optimal auction. That is, it ranks advertisers by 
$\phi(\t_i)w_i$, excluding any advertiser with negative virtual value
(i.e.\ the auction has a reserve price of $r=\bart$ where $\phi(\bart)=0$).
However, actually implementing this auction is unlikely
to be feasible in practice.  In particular, this simple
form relies on our assumptions that bidders are symmetric, and that relevance and
value are independent.  Otherwise, virtual values
(and hence the ranking and reserve price) depend on the identity and
relevance of the bidder, which makes practical auction design difficult.
Even if we could implement such a  revenue-optimal auction, other considerations such as
advertiser and user satisfaction would make doing so undesirable.

Instead, we note two qualitative features of the revenue-optimal auction.  First,
it uses a reserve price.  Second, bidders with values barely above the reserve
price are very low in the rankings.  This inspires the new ranking algorithm we evaluate in
Sections~\ref{s-revenue} and~\ref{s-example}
which ranks ads by
$\{(b_i-r)w_i\}$, which is perhaps the simplest ranking with these two features.
Note that under our proposal the price paid for slot $i$ is   $b_{i+1}( {w_{i+1}}/{w_i}) + r( 1- {w_{i+1}}/{w_i} )$,
which follows from  \eqref{e-GSPpi}, assuming  advertiser $i$ is allocated slot $i$.

\section{The Lowest Symmetric Nash Equilibrium}

We now identify one specific Nash equilibrium that we shall use to compare ranking algorithms, which has appealing features.  We  also draw the connection to  BNE.

Because of the  difficulties involved in a full  Bayes-Nash analysis for the GSP auction, a  commonly used
alternative is to assume a \textit{symmetric Nash
  equilibrium} (SNE), an ex-post equilibrium concept proposed
independently by Varian~\shortcite{varian}
and Edelman et al.~\shortcite{edelman}, (who used the term \textit{locally
  envy-free equilibrium}). A SNE requires the following inequalities
to be satisfied:
\beq\label{e-SNEineq}
\bigl(\t_i-p_i^y(b_{i+1},w_{i+1};w_i)\bigr)x_i\geq
\bigl(\t_i-p_i^y(b_{j+1},w_{j+1};w_i)\bigr)x_j\quad\text{for all}\,\,i,j\,,
\eeq
where $p_i^y$ is the GSP
payment~\eqref{e-GSPpi}. Note that the SNE inequalities~\eqref{e-SNEineq} are
stronger than those that define an ex-post Nash equilibrium, which for $j<i$
would replace the subscripts $j+1$ with $j$ in the right hand side of~\eqref{e-SNEineq}.  Hence the set
of SNE is a subset of the set of ex-post equilibria.
%% While in practice advertisers submit
%% predetermined bids before a query is 
%% received, it can be argued that they repeat a similar game over
%% time against a similar set opponents, and thus might well adjust their
%% bids until some ex-post Nash equilibrium is observed. To this end,
%% Varian~\cite{varian} provided some empirical evidence that advertisers'
%% bidding strategies resemble a SNE. However, there
%% is no doubting that the chief advantages of the SNE model lie in its
%% tractability and not so much in realistic qualities.

The striking result that both Varian~\shortcite{varian} and Edelman et
al.~\shortcite{edelman} realised is that under the ranking algorithm
$\{b_iw_i\}$ any SNE yields an efficient outcome, and furthermore
there exists a SNE --- known as the
\textit{lowest} or \textit{bidder-optimal} SNE --- in which
advertisers' positions and payments are identical to those imposed by the
VCG mechanism.
Aggarwal et al.~\shortcite{aggarwal06} showed a more general connection between the ranking algorithm $\{b_iw_ic_i\}$ (where $c_i$ is a positive constant) and the corresponding ``laddered auction'', a family of truthful mechanisms.
This result is
important for a number of reasons:
\begin{itemize}
\item It provides a focal outcome from the
  space of possible SNE. 
\item It creates a link between SNE behaviour and the Bayesian
  setting.%% \footnote{One can define the auction mechanism
%%   $\clA(\bm{b},\bw)$ that interprets bids as values $b_i=\t_i$ and then
%%   implements the VCG mechanism. As the VCG mechanism is
%%   incentive-compatible, the truthful strategies
%%   $\bm{\b}(\bt)=\bt$ define a BNE for $\clA(\bm{b},\bw)$. As the
%%   lowest SNE is equivalent to the VCG mechanism, it therefore
%%   corresponds to a
%%   feasible outcome of the incomplete information game.}
\item It provides a natural lower bound on revenue, as every other SNE has higher revenue.
\end{itemize}

%\subsection{SNE under General Ranking Functions}

We show that this result is much more general.  In particular, for any ranking
function of the form
\beq\label{e-y}
y(b,w)=\bigl(g(w)b-h(w)\bigr)^+\,,
\eeq
SNE always exist ($g$ and $h$ are arbitrary non-negative functions).
 Since $y$ does not does not necessarily rank the best ad highest, the
outcome is,  in general, no longer efficient.  However,
it does respect $y$, in the sense that the ranking in all SNE is the same ranking
that would be used if bidders reported their true values.  Finally, the lowest
SNE still has a very special structure.  Recall that
$\arule^y(\bm{b},\bw)=\{x_i^y(\bm{b},\bw)\}$ are the allocations 
that result from  bids $\bm{b}$ and ranking function $y$.  By~\eqref{e-y}, 
$\arule^y(\bm{b},\bw)$ is a monotone allocation rule.  Therefore there are unique
payments that make $\arule^y$ an ex-post direct revelation mechanism.  The lowest
SNE implements this mechanism in the exact same way that the standard ranking
implements VCG.  In particular, since ex-post direct revelation
mechanisms are also BNE, this
allows us to give a concise characterisation of the revenue in the lowest SNE.

\begin{theorem}
\label{thm:lowest}
Consider a GSP auction subject to a
ranking algorithm $y(b,w)$ within the class \eqref{e-y}.%
\footnote{For simplicity, we assume that all bidders are ranked by the same algorithm $y$.  However, our result still holds if each is ranked using an individualised algorithm $y_i$ from the class \eqref{e-y}.  This enables our result to apply to settings where, for example, the mechanism design incorporates other factors into the rank score.}
For any realisation $(\bt,\bw)$, there exists a non-empty set of SNE
and all SNE order bidders by $y(\t_i,w_i)$.
Furthermore, the lowest (revenue) SNE,
defined by 
\beq
y(b_i,w_i)x_{i-1}=\sum_{j\geq
  i}y(\t_j,w_j)(x_{j-1}-x_j)\,,\label{e-lowSNE}
\eeq
generates expected revenue %\eqref{e-R}:
\beq
R(\arule^y) =\E\biggl[\sum_{i=1}^n\phi(\t_i)w_i\arule^y_i(\bt,\bw)\biggr]\,.\label{e-R2}
\eeq
\end{theorem}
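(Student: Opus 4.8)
The plan is to reduce the entire statement to Varian's classical analysis by passing to ``score space.'' First I would compute the GSP payment explicitly for the class \eqref{e-y}. For a bidder $i$ whose true score $y(\t_i,w_i)$ is positive, all the relevant bids lie in the region where $y(b,w)=g(w)b-h(w)$, so \eqref{e-GSPpi} gives $p_i^y(b_{j+1},w_{j+1};w_i)=\bigl(y(b_{j+1},w_{j+1})+h(w_i)\bigr)/g(w_i)$. Writing $\tau_i:=y(\t_i,w_i)$ and $Y_i:=y(b_i,w_i)$, I would multiply each SNE inequality \eqref{e-SNEineq} by $g(w_i)>0$ and substitute $g(w_i)\t_i-h(w_i)=\tau_i$ to obtain the equivalent system
\[
(\tau_i-Y_{i+1})\,x_i\;\geq\;(\tau_i-Y_{j+1})\,x_j\qquad\text{for all }i,j.
\]
These are exactly the symmetric-equilibrium inequalities Varian and Edelman et al.\ study for rank-by-bid, but with transformed ``values'' $\tau_i$ and ``bids'' $Y_i$. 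The care needed here is the truncation $(\cdot)^+$: bidders with $y(\t_i,w_i)=0$ are excluded by the reserve and play no role, and one must check $g(w_i)>0$ for every included bidder so the rescaling is legitimate.

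Once the problem is in this canonical form, existence and the ordering claim follow from the standard argument. The adjacent-slot inequalities force $\tau_1\geq\tau_2\geq\cdots$ in lockstep with $x_1>x_2>\cdots$, so every SNE sorts bidders by their true score $y(\t_i,w_i)$, which is exactly the ordering claim; non-emptiness I would establish constructively by exhibiting the lowest SNE below and checking it satisfies the full transformed system. For the recursion, note that in the lowest SNE the \emph{upward}-deviation constraints bind, i.e.\ $(\tau_i-Y_{i+1})x_i=(\tau_i-Y_i)x_{i-1}$, since these are precisely the inequalities that bound each $Y_i$ from below. Rearranging gives $Y_i x_{i-1}=\tau_i(x_{i-1}-x_i)+Y_{i+1}x_i$, and telescoping (with the convention that slots past the last have zero effect, so the tail term vanishes) yields $Y_i x_{i-1}=\sum_{j\geq i}\tau_j(x_{j-1}-x_j)$, which is precisely \eqref{e-lowSNE}. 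I would then verify these scores satisfy \emph{all} the transformed inequalities, not merely the adjacent ones, using that both $\tau$ and $x$ are decreasing (the routine convexity/single-crossing check from Varian).

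For the revenue formula I would argue as follows. By \eqref{e-y} the allocation rule $\arule^y$ is monotone in each bidder's value. The lowest SNE realises $\arule^y$ together with payments that make the induced direct mechanism ex-post individually rational and truthful, with the marginal (excluded) type receiving zero surplus. Since a truthful mechanism is in particular a BNE in which bidders report their values, the already-established revenue identity \eqref{e-R} applies with allocation rule $\arule=\arule^y$, giving \eqref{e-R2} immediately.

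The main obstacle is the last step: showing that the lowest-SNE GSP payments coincide with the Myerson payments for $\arule^y$, i.e.\ that the lowest SNE truthfully implements $\arule^y$ so that \eqref{e-R} is legitimately applicable. Concretely this requires verifying that the per-click payment induced by \eqref{e-lowSNE} matches the envelope/integral payment pinned down by the monotone rule $\arule^y$ together with the zero-surplus boundary condition at $\bart$. Everything upstream is either a direct rescaling (the reduction to score space) or a citation of the standard SNE analysis applied to $\tau_i,Y_i$; the substantive content is confirming this payment identification, after which the Myerson formula \eqref{e-R} does the rest.
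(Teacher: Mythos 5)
Your proposal is correct and follows essentially the same route as the paper's proof: the same explicit payment formula for the class \eqref{e-y}, the same reduction of the SNE inequalities to score space (the paper's inequality \eqref{e-SNEineq2}), the same appeal to Varian's analysis for existence and the ordering claim, and the same strategy of identifying the lowest-SNE payments with the Myerson payments so that the resulting truthful mechanism is a BNE and \eqref{e-R} yields \eqref{e-R2}. The one step you defer as the ``main obstacle'' is exactly what the paper dispatches with a short Abel summation --- rewriting \eqref{e-lowSNE} as $y(b_{i+1},w_{i+1})=y(\t_i,w_i)-\frac{1}{x_i}\int_0^{\t_i}\arule_i^y(t,\bti,\bw)\,dy(t,w_i)$ and using $dy(t,w_i)=g(w_i)\,dt$ to obtain $p_i=\t_i-\frac{1}{x_i}\int_0^{\t_i}\arule_i^y(t,\bti,\bw)\,dt$ --- with the minor correction that the zero-surplus boundary sits at the exclusion threshold $h(w_i)/g(w_i)$ where $y$ vanishes, not at $\bart$ (which is the revenue-optimal reserve, unrelated to $y$'s cutoff).
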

\begin{proof}
From the GSP payment rule \eqref{e-GSPpi}, the price-per-click charged to
bidder $i$ is
\[                                                                                     
p_i^y(b_{i+1},w_{i+1};w_i)=\frac{y(b_{i+1},w_{i+1})+h(w_i)}{g(w_i)}\,\,.
\]
The SNE inequalities~\eqref{e-SNEineq} are then
\[
\Bigl(\t_i-\frac{y(b_{i+1},w_{i+1})+h(w_i)}{g(w_i)}\Bigr)x_i\geq \Bigl(\t_i-\frac{y(b_{j+1},w_{j+1})+h(w_i)}{g(w_i)}\Bigr)x_j\,,%\quad\text{for all}\,\,i,j\,,
\]
which is equivalent to
\beq\label{e-SNEineq2}
\bigl(y(\t_i,w_i)-y(b_{i+1},w_{i+1})\bigr)x_i\geq \bigl(y(\t_i,w_i)-y(b_{j+1},w_{j+1})\bigr)x_j\,.%\quad\text{for all}\,\,i,j\,.
\eeq
Varian's~\shortcite{varian} analysis can be directly reapplied to this
generalisation, leading to the conclusion that there exists a
non-empty set of SNE, and furthermore all SNE use the same allocation
rule, ordering bidders by $y(\t_i,w_i)$.

In the lowest SNE \eqref{e-lowSNE}, advertiser $i$'s payment $p_i$ satisfies
\begin{align*}
y(p_i,w_i)=y(b_{i+1},w_{i+1})&=\frac{1}{x_i}\sum_{j\geq
  i+1}y(\t_j,w_j)(x_{j-1}-x_j)\\
&=y(\t_i,w_i)-\frac{1}{x_i}\sum_{j\geq
  i}x_j\bigl(y(\t_j,w_i)-y(\t_{j+1},w_{j+1})\bigr)\\
&=y(\t_i,w_i)-\frac{1}{x_i}\int_0^{\t_i}\!\!\!\arule_i^y(t,\bti,\bw)\,dy(t,w_i)\,.
\end{align*}
As $dy(t,w_i)=g(w_i)\,dt$,
\[
p_i=\t_i-\frac{1}{x_i}\int_0^{\t_i}\!\!\!\arule_i^y(t,\bti,\bw)\,dt\,,
\]
which precisely describes the payment functions imposed by the ex-post direct
revelation mechanism for the allocation rule
$\arule^y(\bt,\bw)$. Thus, the lowest SNE is also a BNE, and therefore
generates expected revenue \eqref{e-R2}.
\qed
\end{proof}

This generalisation of the lowest SNE to the class of
rankings~\eqref{e-y} includes ranking by bid $\{b_i\}$, by expected revenue 
$\{b_iw_i\}$, and the 
\textit{squashed} ranking $\{b_iw_i^\a\}$ \cite{lahaie}, all with a
possible reserve score (i.e.\ a per-impression reserve).
It also incorporates our proposed algorithm
$\{(b_i-r)w_i\}$ with reserve price~$r$ (i.e.\ a per-click reserve).

Note, however, that the standard ranking algorithm
$\{b_iw_i\}$ with reserve price~$r$ corresponds to the ranking
function $z(b,w)=\I\{b\geq r\}\,bw$, which is \emph{not} of the
required form.
%iak2: added remark per Peter's suggestion. I'm not really happy with
%it though.
%BEN: Given this new structure, I think this entire paragraph should
%be moved to Section 4. This class of ranking functions is defined as it
%is purely because the SNE behaviour is nice and well
%defined. Introducing it as a class we would like to analyze doesn't
%make sense, because we
%would ideally like to analyze a much larger class, for example one
%including {bw} with reserve price r for a start.
%\marginpar{We should add a remark referring forward to the negative
%implications of the standard scoring not being of the required  form} 
This introduces some analytical complexities later when we wish to
compare the properties of our proposed algorithm to this algorithm. 
While Theorem~\ref{thm:lowest} guarantees SNE of ranking
algorithms in the class~\eqref{e-y} are well behaved, the same cannot be
said of the standard ranking with a reserve price.  In fact, we will see
that this algorithm can be quite poorly behaved, in a sense that will be
made clear later.
%iak1: too technical and no longer makes sense here in any case.
%\footnote{If $y(b_i,w_i)=\I\{b_i\geq r\}\,b_iw_i$ then
%  $p_i^y(\cdot)=\max\bigl\{r,\,b_{i+1}(w_{i+1}/w_i)\bigr\}$, which when
%substituted into \eqref{e-SNEineq} does not lead to \eqref{e-SNEineq2}.}

%iak3: small edits through here to try and prevent upper / lower bound confusion
We conclude this section with an additional justification of our
focus on the lowest equilibrium. This justification has the additional benefit
of applying even for rankings outside of the class~\eqref{e-y}, for which SNE may not exist,
a feature we exploit in the next section.
Edelman and Schwarz~\shortcite{edelman2}
argue that because SNE is a full information solution
concept used to model the outcome of a game that is in reality one of
incomplete information, one should only consider SNE
that are in some sense ``feasible'' in the Bayesian
setting. They defined what they called the
\textit{Non-Contradiction Criteria} (NCC), which deems a SNE
implausible if it generates greater expected revenue than any BNE
of the corresponding repeated game of incomplete information.  Rather than
characterising the BNE of the repeated game, they use the
revenue of the optimal BNE as an upper bound.
In their setting, this upper bound on revenue
exactly matches the revenue of the lowest SNE,
and therefore they argue it is the only reasonable
equilibrium.

In our work, we are interested in understanding the behaviour of a
large class of ranking algorithms, none of which need be optimal.
So in our setting, the revenue of the optimal BNE, while still
an upper bound, does not necessarily match the revenue
given by the lowest SNE of an arbitrary ranking algorithm of the form \eqref{e-y}.
However, we know from Theorem~\ref{thm:lowest} that given a ranking
function $y$, all SNE share the same allocation
rule~$\arule^y(\bt,\bw)$.  Therefore, a natural comparison is to
BNE that also share the same allocation rule.  Following Edelman and Schwartz,
rather than characterising such equilibria, we instead derive an upper
bound on their revenue. Since we have fixed the allocations,
Myerson's theory allows us to trivially derive such an upper bound.

\begin{proposition}\label{prop-NCC}
Given a ranking function $y$, the optimal BNE that
ranks ads by $y(\theta_i,w_i)$ has expected revenue 
\[
R(\arule^y)= \E\biggl[\sum_{i=1}^n\phi(\t_i)w_i\arule^y_i(\bt,\bw)\biggr]\,.
\]
\end{proposition}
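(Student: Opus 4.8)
The plan is to treat this as a direct specialisation of Myerson's characterisation of incentive-compatible mechanisms, exactly as the preceding paragraph anticipates. The key point is that fixing the allocation rule to $\arule^y$ reduces the question to a one-dimensional exercise: among all BNE that induce the interim click probabilities $Q_i(\t_i)=\E_{-i}[w_i\arule_i^y(\t_i,\bti,\bw)]$, which one extracts the most revenue? Since $y$ is monotone in $b$, the allocation $\arule^y$ is monotone in each bidder's report and hence implementable, so incentive compatibility pins down each bidder's interim payment up to a single free constant, namely the expected utility $U_i(0)$ of his lowest type.

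First I would invoke the envelope (revenue-equivalence) argument to write the interim utility as $U_i(\t_i)=U_i(0)+\int_0^{\t_i}Q_i(t)\,dt$, so that the interim expected payment is $\t_iQ_i(\t_i)-U_i(\t_i)$. Taking expectations over $\t_i$ and applying the standard Fubini/integration-by-parts step that converts $\E[\int_0^{\t_i}Q_i(t)\,dt]$ into the hazard-rate term $\E[\tfrac{1-F(\t_i)}{f(\t_i)}Q_i(\t_i)]$, the per-bidder expected payment collapses to $\E[\phi(\t_i)Q_i(\t_i)]-U_i(0)$, with $\phi$ the virtual value of \eqref{e-phi}. The per-click payment structure is what carries the factor $w_i$ into $Q_i$, which is why the virtual surplus appears with a $w_i$ rather than in raw-value form.

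Summing over bidders, the revenue of any BNE implementing $\arule^y$ equals $R(\arule^y)-\sum_iU_i(0)$, where $R(\arule^y)=\E[\sum_{i=1}^n\phi(\t_i)w_i\arule_i^y(\bt,\bw)]$ is precisely the virtual-surplus expression of \eqref{e-R}. Individual rationality forces $U_i(0)\geq 0$, and since $U_i$ is non-decreasing this is the binding constraint; revenue is therefore maximised exactly when $U_i(0)=0$ for every $i$, yielding the claimed value $R(\arule^y)$. This simultaneously explains why the formula is an \emph{upper} bound on the revenue of BNE sharing this allocation and why it is attained.

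I do not expect a genuine obstacle, since the content is entirely Myerson's revenue characterisation applied to a fixed allocation rule. The only points requiring care are bookkeeping: verifying the hazard-rate/integration-by-parts identity under our assumptions (differentiable $\phi$, monotone hazard rate, support at zero), confirming that the $w_i$ factor is handled correctly in the interim allocation, and checking that $U_i(0)=0$ is genuinely feasible under $\arule^y$ so that the optimum is attained rather than merely approached.
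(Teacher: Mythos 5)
Your proposal is correct and takes essentially the same route as the paper: the paper states Proposition~\ref{prop-NCC} without a detailed proof, noting only that once the allocation rule is fixed, ``Myerson's theory allows us to trivially derive such an upper bound,'' and your envelope/revenue-equivalence derivation with the $U_i(0)=0$ binding IR constraint is precisely the standard argument being invoked. You have simply written out the bookkeeping the paper leaves implicit.
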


By Theorem~\ref{thm:lowest}, this upper bound exactly matches the
revenue of the lowest SNE, providing
additional justification for our decision to use it as a focal outcome
of a GSP auction.
%iak3: now it does
%[COMMENT:  Theorem 1 doesn't currently say that all other SNE generate higher revenue ....]
Further, any method of selecting an SNE given
types  $(\bt,\bw)$ implicitly defines such a ranking function $y$, not
necessarily within the class \eqref{e-y}, so this upper bound
remains useful even for ranking algorithms outside this class.

\section{Revenue Dominance}
\label{s-revenue}

%iak2: I think we need to be a little careful about expecations and
%allocations in this section.  Our claims are about the expected
%revenue where the expectation is over types of bidders.
%Thus, particularly with the lemmas, our notation needs to
%be specifying outcomes for the entire type space, not just
%for a single instance.
%BEN: I think I answered this in the email? Unless I missed the point?

In this section we compare the revenue
generated by our proposed
ranking algorithm $\{(b_i-r)w_i\}$ with the standard ranking $\{b_iw_i\}$,
both employing the same per-click reserve price $r$. This comparison is of
particular interest as the two algorithms exclude the same set of
advertisers, thus isolating the effect of incorporating the reserve price
into the ranking function. We find that for sufficiently
small\footnote{We give a detailed discussion on the meaning of
  ``sufficiently small'' later, however most reasonable reserve levels
do suffice.} reserve prices,
our proposed algorithm is guaranteed to generate greater
revenue. While in practice the designer may not be solely interested
in revenue, this result helps to show how our algorithm may offer favourable
tradeoffs between revenue and welfare. That is, for a given target
revenue, a designer using our ranking algorithm needs to use a
smaller (and thus less distortionary) reserve price than a designer
employing the standard ranking.

We take our assumption of equilibrium behaviour in GSP auctions to be SNE
whose revenue does not exceed the bound in Proposition~\ref{prop-NCC}.
For our proposed ranking algorithm $\{(b_i-r)w_i\}$
this is equivalent to taking the lowest SNE.
However, the standard ranking algorithm $\{b_iw_i\}$ with
reserve price $r$ has the corresponding ranking function
$z(b,w)=\I\{b\geq r\}\,bw$, which is not within the class
\eqref{e-y}.  With this ranking algorithm, we are not certain whether or not  a SNE
is guaranteed to exist.  In Appendix~\ref{a-SNEcounter}, we give an
example where we can show that any SNE that does exist cannot always rank
ads by $\theta_i w_i$.  That is, ads do not necessarily appear in
the desired order.

Despite the complexity of behaviour with this ranking algorithm, we present the
following theorem which states that, for sufficiently small reserve prices,
the lowest SNE of the GSP auction subject to
our proposed ranking $\{(b_i-r)w_i\}$ generates greater expected
revenue than \textit{any} SNE under the standard ranking
$\{b_iw_i\}$ (with the same reserve price $r$) that respects the
revenue upper bound from Proposition~\ref{prop-NCC}.  % This is a natural
% comparison, because with the same reserve price both ranking systems
% are admitting the same ads to the auction and differing only in how
% they are ranked.

\begin{theorem}\label{t-main}
%iak3: theorem statements should be as concise as possible.  In particular, we want to remove assumptions or definitions we can make elsewhere
%iak3: also made what property we assume about allocations is more precise.
%Suppose advertisers have i.i.d.\ types $(\t_i,w_i)$ such that $\t_i$
%and $w_i$ are independent and the hazard rate $f(\t_i)/(1-F(\t_i))$ is
%continuous and non-decreasing. Let $\bart$ be the value at which an
%advertiser's virtual value is zero:
%\[
%\phi(\bart)=\bart-\frac{1-F(\bart)}{f(\bart)}=0\,.
%\]
For $r\in(0,\bart]$,\footnote{Recall that $\bart$ is the theoretical
  revenue-optimal reserve price --- that is, the value at
  which an advertiser's virtual value becomes zero: $\phi(\bart)=0$.} define $R_1(r)$ and
$R_2(r)$ to be the expected revenues from two allocation rules
that select outcomes that are SNE and do not exceed the bound from Proposition~\ref{prop-NCC}
%\footnote{We deem a SNE to be admissible if it satisfies the
%  Non-Contradiction Criterion.}
%\footnote{We deem a SNE to be admissible if it does not exceed the bound from %
% Proposition~\ref{prop-NCC}.}%
%SNE
under the ranking algorithms $\{(b_i-r)w_i\}$ and $\{b_iw_i\}$
respectively. If
\beq\label{e-maincond}
r\leq\inf_{t \geq r}\Bigl\{t-\frac{\phi(t)}{\phi'(t)}\Bigr\}\,,
\eeq
then $R_1(r)> R_2(r)$.
\end{theorem}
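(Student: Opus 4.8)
The plan is to reduce both revenues to Myerson virtual-value expressions and then compare the two resulting allocation rules, which differ only in how they order a common set of admitted bidders, through a rearrangement argument driven by the hypothesis \eqref{e-maincond}. First I would invoke Theorem~\ref{thm:lowest} and Proposition~\ref{prop-NCC}. The proposed ranking $\{(b_i-r)w_i\}$ lies in the class \eqref{e-y} with $g(w)=w$ and $h(w)=rw$, so by Theorem~\ref{thm:lowest} its lowest SNE is the unique SNE meeting the Proposition~\ref{prop-NCC} bound, and $R_1(r)=\E[\sum_{i}\phi(\t_i)w_i\arule^y_i(\bt,\bw)]$, where $\arule^y$ orders the admitted bidders by $(\t_i-r)w_i$. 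For the standard ranking, the corresponding bound from Proposition~\ref{prop-NCC}, applied to its type-ordering $z(\t,w)=\I\{\t\ge r\}\,\t w$, is $\E[\sum_{i}\phi(\t_i)w_i\arule^z_i(\bt,\bw)]$, where $\arule^z$ orders the admitted bidders by $\t_iw_i$; hence any SNE respecting the bound satisfies $R_2(r)\le\E[\sum_i\phi(\t_i)w_i\arule^z_i]$. Since $(\t_i-r)w_i>0\iff\t_i>r$, both rules admit exactly $\{i:\t_i\ge r\}$ up to a measure-zero boundary, so it suffices to show $\E[\sum_i\phi(\t_i)w_i\arule^y_i]>\E[\sum_i\phi(\t_i)w_i\arule^z_i]$.

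Next I would fix a realisation $(\bt,\bw)$ and regard each allocation as a matching of the admitted bidders to the decreasing slot effects $s_1>s_2>\cdots$ (appending zero effects for bidders who receive no slot). Writing $v_i=\phi(\t_i)w_i$, the realised virtual surplus under an ordering equals $\sum_k s_k v_{(k)}$, whose expectation is the revenue, where $(k)$ denotes the bidder placed in slot $k$. I would pass from $\arule^z$ to $\arule^y$ by a minimal sequence of adjacent transpositions; the pairs swapped along the way are exactly those on which the orders by $\t w$ and by $(\t-r)w$ disagree, and swapping an adjacent pair across slots $k,k+1$ changes the surplus by $(s_k-s_{k+1})$ times the $v$-gap between the bidder moved up and the bidder moved down. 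The heart of the proof is the pairwise claim: for admitted $i,j$, whenever $\t w$ and $(\t-r)w$ rank $i,j$ in opposite orders, the order induced by $(\t-r)w$ agrees with the order induced by $v=\phi w$.

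To establish this claim I would first observe that a disagreement, say $\t_iw_i>\t_jw_j$ while $(\t_j-r)w_j>(\t_i-r)w_i$, forces $w_i>w_j$ (add the two inequalities, which cancels the $\t$-terms and leaves $rw_i>rw_j$) and then $\t_j>\t_i$ (otherwise both factors would favour $i$). The key is that \eqref{e-maincond} is precisely the statement that $\phi(t)/(t-r)$ is non-decreasing on $(r,\infty)$: rearranging $r\le t-\phi(t)/\phi'(t)$ with $\phi'>0$ gives $\phi(t)\le\phi'(t)(t-r)$, which is the sign condition for $\tfrac{d}{dt}[\phi(t)/(t-r)]\ge0$. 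Together with monotonicity of $\phi$ (from the non-decreasing hazard rate), a short case split on the signs of $\phi(\t_i),\phi(\t_j)$ yields $\phi(\t_j)w_j\ge\phi(\t_i)w_i$: when both are non-negative I would write $\phi(\t)w=[\phi(\t)/(\t-r)](\t-r)w$ and combine the monotone ratio with $(\t_j-r)w_j>(\t_i-r)w_i$; when one or both are negative the inequality follows directly from $\t_j>\t_i$ and $w_j<w_i$. Each transposition toward $\arule^y$ then has the higher-$v$ bidder moving up and $s_k>s_{k+1}$, so it is surplus non-decreasing, giving $\sum_k s_k v^{\,y}_{(k)}\ge\sum_k s_k v^{\,z}_{(k)}$ pointwise.

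Finally I would take expectations to obtain $R_1(r)\ge R_2(r)$ and then upgrade to strictness: for any $r>0$ there is a positive-measure set of realisations on which the two orders genuinely disagree (e.g.\ $w_i>w_j$, $\t_j>\t_i$, $\t_iw_i>\t_jw_j$ and $(\t_j-r)w_j>(\t_i-r)w_i$, with the pair occupying two distinct slots), and on part of this set the virtual-value gap is strict, so the coupled inequality is strict in expectation. The step I expect to be the main obstacle is the pairwise claim and its clean insertion into the global rearrangement: reading \eqref{e-maincond} correctly as monotonicity of $\phi(t)/(t-r)$, handling the cases where admitted bidders have negative virtual value (which can occur for $r\le\bart$), and verifying that the minimal adjacent-transposition decomposition swaps exactly the disagreement pairs so that every step is governed by the claim.
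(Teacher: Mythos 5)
Your core revenue comparison is correct and essentially the paper's own argument: like the paper, you reduce both sides to Myerson virtual-surplus expressions, transform one ordering into the other by adjacent transpositions that resolve exactly the inversions between the two rankings, and show each swap is governed by a pairwise claim. Your reading of \eqref{e-maincond} as the statement that $\phi(t)/(t-r)$ is non-decreasing on $(r,\infty)$ (since $\phi'>0$ under the monotone-hazard-rate assumption) is a clean, correct equivalent of the paper's messier route, which instead analyses the infimum of $g(\t_i,\t_j)=\bigl(\t_j\phi(\t_i)-\t_i\phi(\t_j)\bigr)/\bigl(\phi(\t_i)-\phi(\t_j)\bigr)$; your case split on the signs of the virtual values also goes through, since $\phi$ is strictly increasing.

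However, there is a genuine gap at the step ``hence any SNE respecting the bound satisfies $R_2(r)\le\E\bigl[\sum_i\phi(\t_i)w_i\arule^z_i\bigr]$.'' The ranking function $z(b,w)=\I\{b\geq r\}\,bw$ is \emph{not} in the class \eqref{e-y}, so Theorem~\ref{thm:lowest} does not apply to it, and an SNE under the standard ranking with a reserve need not order ads by $z(\t_i,w_i)=\t_iw_i$ at all: the paper's Appendix~\ref{a-SNEcounter} exhibits a realisation in which no SNE can respect that order. For such an SNE, the bound from Proposition~\ref{prop-NCC} is computed from the ranking function that the SNE selection \emph{itself} implicitly defines, i.e.\ it is $R(\arule)$ for the SNE's own allocation rule $\arule$, not $R(\arule^z)$. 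The theorem's hypothesis therefore only gives you $R_2(r)\leq R(\arule)$, and your chain $R_1(r)=R(\arule^y)>R(\arule^z)\geq R_2(r)$ is broken at the last link unless you also prove $R(\arule)\leq R(\arule^z)$ for every allocation rule $\arule$ selecting SNE of the standard ranking. This is exactly the paper's Lemma~\ref{l-zSNE}, and it is not automatic: its proof extracts, from the SNE inequalities, that whenever the equilibrium bid order $b_iw_i$ disagrees with the value order $\t_iw_i$, the monotone hazard rate forces the virtual-value order to side with $\t_iw_i$, and then reruns the rearrangement argument. Your proposal needs this additional lemma (or an equivalent argument) to cover the standard ranking's possibly ill-behaved equilibria.
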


Informally, condition \eqref{e-maincond} seems to hold for most
reasonable distributions and for most $r\in(0,\bart]$. More precisely,
it will  hold for all $r\in(0,\bart]$ when $\phi(\t_i)$ is weakly convex.  
It is  straightforward to show that  sufficient conditions for $\phi(\t_i)$ to be weakly convex are that $f$ is log-concave and non-increasing.
Log-concavity is a property of many common distributions and is a standard assumption in economic analysis~\cite{bagnoli}.  Requiring $f$ to be non-increasing is somewhat restrictive, but permits, for example, the uniform or exponential distribution.  
%Condition \eqref{e-maincond} seems to hold for most
%reasonable distributions and for most $r\in[0,\bart]$. If $\t_i$ has a
%uniform or exponential distribution, then $\phi(\t_i)$ is linear and
%condition~\eqref{e-maincond} holds for all $r\in[0,\bart]$. More
%generally, if $\phi(\t_i)$ is weakly convex then
%condition~\eqref{e-maincond} holds for all $r\in[0,\bart]$.
% If $\t_i$
% has a log-normal distribution with $(\mu,\sigma^2)=(0,1)$, then we
% find $\t_0=1.3534$ but the minimum $a$ occurs at $t=1.9730$, where
% \[
% t-\frac{\phi(t)}{\phi'(t)}=1.3497\,,
% \]
% yielding a tiny interval in which Theorem~\ref{t-main} doesn't hold.
Conversely, If $\phi(\t_i)$ is concave then it is likely
that condition~\eqref{e-maincond} does not hold for some choices of
$r$. For example, consider $\t_i\sim{\sf Beta}(2,2)$ which has a
monotone hazard rate and defines $\phi(\t_i)$ to be concave. In this case
$\bart=0.4215$, and for all $r$ the RHS of \eqref{e-maincond} is
minimised at $t=1$ to the value $\frac{1}{3}$. Thus, there exists
an interval $(\frac{1}{3},\bart]$ in which $r$ does not satisfy
condition~\eqref{e-maincond}.   

For choices of $r$ that do not satisfy
condition~\eqref{e-maincond}, it does not follow that our ranking
algorithm therefore generates less revenue than the standard. On the
contrary, we expect our ranking algorithm to generate more revenue in
most cases. 
To give an intuitive explanation, our proof of
Theorem~\ref{t-main} involves showing that one can apply a large
number of pairwise allocation swaps to transform the
allocation rule arising from the standard ranking to that of our
proposed ranking, each of which increases revenue. If $r$ is slightly
greater than the RHS of
\eqref{e-maincond} then a small proportion of swaps will decrease
revenue, while most will still cause an increase. In many such cases, the
net result will still be a revenue increase.  Indeed, we see such
behaviour in our simulations given below.

% Note that if $\phi(\t_i)$ is linear (as in the uniform and exponential cases),
% then employing the design $\clD(y_r)$ with reserve price $r=\bart$
% preserves the ordering of the $(\phi_i(\t_i)w_i)$'s as well, thus
% implying that it generates the theoretical optimum revenue.

%iak2: I think we have two different notations for the same thing.
%I introduced x(y,\theta,w) earlier to be the allocation that you
%would get from y with truthful reporting.  It might be better as x^y(\theta,w)
%Is this the same concept as the definition here?  If so, we should only have
%one notation and can simplify this discussion.
%BEN: Yes I think your {x_i(y,\theta,\w)} are the same allocations as my
%X^y. We need notational support for allocations that have no reliance
%on a ranking function for Lemma 1, so I think it still makes sense to
%define an 'allocation scheme' as a set of allocation functions
%{x_i(\theta,w)}. Then it seems natural(?) to use x_i^y(\theta,w) for
%allocations that preserves the ranking function y(.)
We work up to Theorem~\ref{t-main} through a series of lemmas.
As previously mentioned, the upper bound from Proposition~\ref{prop-NCC}
is well defined for arbitrary monotone allocation rules.
Let $R(\arule)$ be the value of this bound for the allocation rule
$\arule$:
\beq\label{e-R3}
R(x)=\E\biggl[\sum_{i=1}^n\phi(\t_i)w_i\arule_i(\bt,\bw)\biggr]\,.
\eeq
If $\arule$ is not monotone then Proposition~\ref{prop-NCC}
no longer holds, however $R(\arule)$ is still well
defined as the same functional form. The only difference in this case
is that $R(\arule)$ does not translate as an achievable revenue.

Our first lemma shows how one can increase the integrand
of \eqref{e-R3} for a given realisation $(\bt,\bw)$. This is achieved by
performing a simple adjustment or \emph{swap} to the allocation rule
$\arule$.

% COMMENT: I don't think this lemma is quite what we want.  Because we are
% taking an expectation over types with a continuous pdf, this lemma is
% vacuously true because any change at a single point has zero effect on
% the expectation.  I think we want to say that there is an increase at this
% specific point, so that in the next lemma, when we implicitly integrate
% over the type distribution, we get the result we want.  If this is correct,
% we should restate the lemma in this form.
\begin{lemma}\label{l-swap}
Suppose $\arule$ is an allocation rule
%Suppose we have an allocation scheme $\clX=\bigl\{x_i(\bt,\bw)\bigr\}$
for which there exists a realisation $(\bt,\bw)$ and specific $i,j$ such that 
\begin{align*}
\phi(\t_i)w_i&>\phi(\t_j)w_j\,,\\
\arule_i(\bt,\bw)&< \arule_j(\bt,\bw)\,.
\end{align*}
Define the adjusted allocation rule $\tilde{\arule}$ which is
identical to $\arule$ except for the single swap
$\tilde{\arule}_i(\bt,\bw)=x_j(\bt,\bw)$ and vice versa.
% \begin{align*}
% \tilde{\arule}_i(\bt,\bw)=x_j(\bt,\bw)\,,\\
% \tilde{\arule}_j(\bt,\bw)=x_i(\bt,\bw)\,.
% \end{align*}
Then, %$R(\tilde{\clX})\geq R(\clX)$.
\[
\sum_{i=1}^n\phi(\t_i)w_i\tilde{\arule}_i(\bt,\bw)> \sum_{i=1}^n\phi(\t_i)w_i\arule_i(\bt,\bw)\,.
\]
\end{lemma}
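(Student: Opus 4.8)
The plan is to treat this as a local exchange (rearrangement) argument. Since $\tilde{\arule}$ is obtained from $\arule$ by swapping only the two allocations in coordinates $i$ and $j$ while leaving every other coordinate untouched, the two weighted sums $\sum_{k}\phi(\t_k)w_k\tilde{\arule}_k(\bt,\bw)$ and $\sum_{k}\phi(\t_k)w_k\arule_k(\bt,\bw)$ agree term-by-term for all $k\neq i,j$. Hence my first step is to discard these identical terms and reduce the comparison to just the four surviving terms, namely the $i$ and $j$ contributions before and after the swap. Using $\tilde{\arule}_i(\bt,\bw)=\arule_j(\bt,\bw)$ and $\tilde{\arule}_j(\bt,\bw)=\arule_i(\bt,\bw)$, the post-swap $i,j$ contribution is $\phi(\t_i)w_i\,\arule_j(\bt,\bw)+\phi(\t_j)w_j\,\arule_i(\bt,\bw)$, while the pre-swap contribution is $\phi(\t_i)w_i\,\arule_i(\bt,\bw)+\phi(\t_j)w_j\,\arule_j(\bt,\bw)$.

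The second step is to subtract and factor. Grouping terms by the allocation values gives
\[
\sum_{k=1}^n\phi(\t_k)w_k\tilde{\arule}_k(\bt,\bw)-\sum_{k=1}^n\phi(\t_k)w_k\arule_k(\bt,\bw)=\bigl(\phi(\t_i)w_i-\phi(\t_j)w_j\bigr)\bigl(\arule_j(\bt,\bw)-\arule_i(\bt,\bw)\bigr)\,.
\]
Now I simply invoke the two hypotheses: the first factor is strictly positive because $\phi(\t_i)w_i>\phi(\t_j)w_j$, and the second factor is strictly positive because $\arule_i(\bt,\bw)<\arule_j(\bt,\bw)$. A product of two strictly positive reals is strictly positive, so the difference of the sums is strictly positive, which is exactly the claimed inequality.

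I do not expect any genuine obstacle here; the content is entirely a one-line exchange inequality, so the only thing requiring care is the bookkeeping that guarantees the clean cancellation and the correct factored form (in particular, tracking which allocation lands on which virtual-value weight after the swap). The reason to state it as a standalone lemma is forward-looking rather than difficult: it is the elementary engine that will later be iterated over a sequence of pairwise swaps to transform the standard-ranking allocation into the proposed-ranking allocation in the proof of Theorem~\ref{t-main}, with each individual swap raising the value of the functional $R(\arule)$ in \eqref{e-R3}.
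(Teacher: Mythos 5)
Your proof is correct and is exactly the argument the paper intends: the paper's own proof of this lemma is just ``Direct from the conditions,'' and your explicit cancellation-and-factoring computation $\bigl(\phi(\t_i)w_i-\phi(\t_j)w_j\bigr)\bigl(\arule_j(\bt,\bw)-\arule_i(\bt,\bw)\bigr)>0$ is the one-line exchange inequality that claim leaves implicit.
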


\begin{proof}
Direct from the conditions.
\qed\end{proof}

%iak2: now said in definitions
%\paragraph{Remark.} Note that an allocation scheme
%$\clX=\bigl\{x_i(\bt,\bw)\bigr\}$ does not necessarily correspond to an
%incentive-compatible mechanism design. We state Lemma~\ref{l-swap} in
%this form as it is not guaranteed that such a swap preserves the
%monotonicity conditions required for incentive-compatible payment
%functions to exist.

Our next lemma follows as a corollary to Lemma~\ref{l-swap}, extending
it to situations where improvements are possible through a sequence of
swaps.

\begin{lemma}\label{l-yz}
Let $\arule^y$ and $\arule^z$ be two allocation rules such that
the following properties hold for all $\t_i,w_i,\t_j,w_j$:
\beq\label{e-yz1}
y(\t_i,w_i)=0\quad\Leftrightarrow\quad z(\t_i,w_i)=0\,,
\eeq
%\vspace{-3mm}
\beq
\Bigl\{y(\t_i,w_i)> y(\t_j,w_j)\quad\text{AND}\quad z(\t_i,w_i)<
z(\t_j,w_j)\Bigr\}\quad\Rightarrow\quad\phi(\t_i)w_i>\phi(\t_j)w_j\,. \label{e-yz2}
\eeq
Then, $R(\arule^y)\geq R(\arule^z)$. Furthermore, if $\arule^y$ and
$\arule^z$ differ with positive probability then $R(\arule^y)> R(\arule^z)$.
\end{lemma}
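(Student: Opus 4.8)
The plan is to reduce the whole statement to a single application of Lemma~\ref{l-swap} per swap, by exhibiting, for each fixed realisation $(\bt,\bw)$, a sequence of swaps that transforms the $\arule^z$-allocation into the $\arule^y$-allocation while strictly increasing the integrand $\sum_i\phi(\t_i)w_i x_i$ at every step. First I would fix $(\bt,\bw)$ and observe that, by \eqref{e-yz1}, $y$ and $z$ exclude exactly the same advertisers, so both rules distribute the same multiset of allocations (the slot effects $s_1>s_2>\cdots$ together with zeros for excluded or unslotted advertisers) among the same $n$ advertisers; the $y$-allocation lists advertisers in decreasing $y$-score and hands out these allocations in decreasing order, and likewise for $z$. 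Ties in $y$- or $z$-scores occur with probability zero for the rankings we consider, so I may assume strict orders almost surely. Thus transforming $\arule^z$ into $\arule^y$ is exactly the problem of sorting the $z$-order into the $y$-order.

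Second, I would carry out this sort by adjacent transpositions (bubble sort) toward the $y$-order: repeatedly locate two advertisers in adjacent slots whose $y$-scores are out of order and swap them. The crucial combinatorial fact I would invoke is the standard inversion-counting property of such a sort: the unordered pairs that ever get swapped are exactly the pairs $(a,b)$ that are \emph{discordant} between the $z$-order and the $y$-order --- pairs whose relative order differs under $y$ and $z$ --- each swapped exactly once, with no concordant pair ever touched.

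Third, I would check that every such swap meets the hypotheses of Lemma~\ref{l-swap}. Consider a swap at adjacent slots $k,k+1$ with advertiser $a$ above (allocation $x_a=s_k$) and $b$ below (allocation $x_b=s_{k+1}$), where the swap raises $b$ because $y(\t_b,w_b)>y(\t_a,w_a)$. Since this pair is discordant, the $z$-order is reversed, i.e.\ $z(\t_b,w_b)<z(\t_a,w_a)$, so condition \eqref{e-yz2} with $i=b$, $j=a$ gives $\phi(\t_b)w_b>\phi(\t_a)w_a$. Together with $x_b<x_a$ this is precisely the setting of Lemma~\ref{l-swap}, so the swap strictly increases $\sum_i\phi(\t_i)w_ix_i$. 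After finitely many such swaps the allocation equals $\arule^y$, so the integrand under $\arule^y$ is at least that under $\arule^z$ for every realisation, and taking expectations yields $R(\arule^y)\geq R(\arule^z)$. For strictness, if the two rules differ with positive probability then on that positive-measure event the orders differ, forcing at least one discordant pair and hence at least one strictly-increasing swap; the integrand is then strictly larger there, and the expectation is strict, giving $R(\arule^y)>R(\arule^z)$.

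The main obstacle is the third step's reliance on \emph{only ever swapping discordant pairs}: condition \eqref{e-yz2} controls $\phi w$ only for discordant pairs and only in one direction, so the argument breaks if the sorting procedure ever swaps a concordant pair, about which \eqref{e-yz2} says nothing. This is exactly why I would insist on the adjacent-transposition scheme and verify the inversion-counting fact that its swaps coincide with the discordant pairs; a careless choice of general (non-adjacent) swaps from Lemma~\ref{l-swap} could touch a concordant pair and fail to guarantee a revenue increase.
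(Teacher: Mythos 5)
Your proposal is correct and takes essentially the same route as the paper's proof: decompose the permutation carrying the $z$-order into the $y$-order into adjacent transpositions, each resolving exactly one inversion (discordant pair), apply Lemma~\ref{l-swap} to each such swap, and pass from the pointwise inequality to expectations, with strictness when the rules differ with positive probability. The one nuance the paper makes explicit that you gloss over is that some swaps may be \emph{trivial} (e.g.\ a discordant pair of unslotted advertisers, both with zero allocation), so not every swap strictly increases the integrand; strictness instead follows because whenever the allocations themselves differ at a realisation, at least one swap must be non-trivial and hence strictly increasing.
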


\begin{proof}
The intuition behind Lemma~\ref{l-yz} is clear --- if it holds that
any time $y$ and $z$
disagree about the ranking of two advertisers then $y$ is
`correct', then it should hold that $R(\arule^y)>
R(\arule^z)$. The proof involves
showing that one can perform a sequence of swaps to transform $\arule^z$
into $\arule^y$, where each swap satisfies the conditions of
Lemma~\ref{l-swap} (see Appendix~\ref{a-yz}).
\qed\end{proof}

%iak2: we need some discussion about what the lemma says and the intuition for
%why it is true.
%BEN: I think most of the required discussion is now included just
%after Thm2?
Our third lemma is the heart of the proof. It shows that for a
sufficiently small reserve price $r$, the two ranking functions
defined by our proposed algorithm and the standard satisfy conditions
\eqref{e-yz1} and \eqref{e-yz2} from the previous lemma.

\begin{lemma}\label{l-Rdom}
Given a reserve price $r\in(0,\bart]$, let
\begin{align}
&y(\t_i,w_i)=(\t_i-r)^+w_i\,,\label{e-Rdomy}\\
&z(\t_i,w_i)=\I\{\t_i\geq r\}\,\t_iw_i\,.\label{e-Rdomz}
\end{align}
If
\beq\label{e-Rdomr}
r\leq\inf_{t \geq r}\Bigl\{t-\frac{\phi(t)}{\phi'(t)}\Bigr\}\,,
\eeq
then $R(\arule^y)> R(\arule^z)$.
\end{lemma}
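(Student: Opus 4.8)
The plan is to show that the pair of ranking functions $y$ and $z$ satisfies the two hypotheses \eqref{e-yz1} and \eqref{e-yz2} of Lemma~\ref{l-yz}, after which the conclusion $R(\arule^y) > R(\arule^z)$ is immediate. Hypothesis \eqref{e-yz1} is essentially free: for $w_i > 0$ we have $y(\t_i,w_i)=0 \Leftrightarrow \t_i \leq r$ and $z(\t_i,w_i)=0 \Leftrightarrow \t_i < r$, so the two exclusion regions coincide except on the event $\{\t_i = r\}$, which has probability zero and hence does not affect any expected quantity. All of the real work is therefore in verifying the implication \eqref{e-yz2}.

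To establish \eqref{e-yz2}, I would fix a realisation with $y(\t_i,w_i) > y(\t_j,w_j)$ and $z(\t_i,w_i) < z(\t_j,w_j)$ and show $\phi(\t_i)w_i > \phi(\t_j)w_j$. First I would argue that, off a null set, both advertisers lie strictly above the reserve: if $\t_i \leq r$ then $y(\t_i,w_i)=0$, contradicting $y(\t_i,w_i) > y(\t_j,w_j) \geq 0$, while $z(\t_i,w_i) < z(\t_j,w_j)$ forces $z(\t_j,w_j) > 0$ and hence $\t_j \geq r$, with equality only on a null set. With $\t_i,\t_j > r$ the two hypotheses become $(\t_i - r)w_i > (\t_j - r)w_j$ and $\t_i w_i < \t_j w_j$, which rearrange to
\[
\frac{\t_j - r}{\t_i - r} \;<\; \frac{w_i}{w_j} \;<\; \frac{\t_j}{\t_i}\,.
\]
Non-emptiness of this interval requires $\tfrac{\t_j - r}{\t_i - r} < \tfrac{\t_j}{\t_i}$, which (cross-multiplying and using $r>0$) is equivalent to $\t_i > \t_j$. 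Thus the configuration necessarily pits a high-value, low-relevance advertiser $i$ against a lower-value, higher-relevance advertiser $j$, with $y$ favouring $i$ and $z$ favouring $j$.

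The heart of the proof is to turn condition~\eqref{e-Rdomr} into two monotonicity facts. Since $\phi' > 0$ under the monotone hazard-rate assumption, \eqref{e-Rdomr} is exactly the statement $\phi(t)/\phi'(t) \leq t - r$ for all $t \geq r$, i.e.\ $\phi(t) \leq \phi'(t)(t-r)$. Writing $g(t) := \phi(t)/(t-r)$ and computing $g'(t) = [\phi'(t)(t-r)-\phi(t)]/(t-r)^2$, this says precisely that $g$ is non-decreasing on $(r,\infty)$; and because $r>0$ gives $\phi(t) \leq \phi'(t)(t-r) < \phi'(t)\,t$, the same bound makes $\phi(t)/t$ increasing as well. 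I would then regard $L(\rho) := \phi(\t_i)\rho - \phi(\t_j)$ as an affine function of $\rho = w_i/w_j$, so that the desired $\phi(\t_i)w_i > \phi(\t_j)w_j$ is just $L(\rho) > 0$ on the open interval above. Since $L$ is affine and, as $\t_i \neq \t_j$, not identically zero, it suffices to check $L \geq 0$ at the two endpoints: at $\rho = (\t_j - r)/(\t_i - r)$ this reduces to $g(\t_i) \geq g(\t_j)$, and at $\rho = \t_j/\t_i$ to $\phi(\t_i)/\t_i \geq \phi(\t_j)/\t_j$, both of which follow from the two monotonicities together with $\t_i > \t_j > r$.

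The step I expect to be the main obstacle is this last one, precisely because $r \leq \bart$ permits included types with $\phi(\t_i) < 0$. When $\phi(\t_i) < 0$ the map $\rho \mapsto \phi(\t_i)\rho$ is \emph{decreasing}, so one cannot simply multiply the lower bound $w_i/w_j > (\t_j - r)/(\t_i - r)$ by $\phi(\t_i)$ and preserve its direction, and the monotonicity of $g$ alone does not suffice. Routing the argument through the affine function $L$ is what resolves this: the lower endpoint of the interval is controlled by $g$ (the reserve-shifted ratio), while the upper endpoint—which becomes the binding one exactly in the negative-virtual-value regime—is controlled by $\phi(t)/t$ via the constraint $\t_i w_i < \t_j w_j$, and \eqref{e-Rdomr} delivers both monotonicities at once. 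Finally, to upgrade the weak inequality from Lemma~\ref{l-yz} to $R(\arule^y) > R(\arule^z)$, I would note that for $r>0$ the two rankings order some pair of advertisers differently on an event of positive probability, so $\arule^y$ and $\arule^z$ differ with positive probability and the strict conclusion of Lemma~\ref{l-yz} applies.
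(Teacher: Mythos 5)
Your proof is correct, and it shares the paper's skeleton---verify hypotheses \eqref{e-yz1} and \eqref{e-yz2} and invoke Lemma~\ref{l-yz}---but the core verification of \eqref{e-yz2} takes a genuinely different route. The paper introduces the two-variable crossover point $g(\t_i,\t_j)=(\t_j\phi_i-\t_i\phi_j)/(\phi_i-\phi_j)$, at which the map $k\mapsto(\t_i-k)/(\t_j-k)$ equals $\phi_i/\phi_j$, and establishes $r\leq g(\t_i,\t_j)$ for all $\t_i>\t_j\geq r$ by a two-dimensional infimum analysis: the infimum of $g$ is attained either in a diagonal limit $\t_j\rightarrow\t_i$, where $g$ tends to $t-\phi(t)/\phi'(t)$ (this is exactly where condition \eqref{e-Rdomr} comes from), or at the corner $(\t_i,\t_j)=(T,r)$, with $T$ the top of the type support, which the paper handles via $r\leq\bart$, i.e.\ $\phi(r)\leq0$. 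You instead extract two one-variable monotonicity facts by direct differentiation---$\phi(t)/(t-r)$ and $\phi(t)/t$ are non-decreasing under \eqref{e-Rdomr}---and combine them through the affine function $L(\rho)$ on the interval of feasible ratios $\rho=w_i/w_j$. For pairs with positive virtual values your first endpoint check is algebraically the same inequality as the paper's $r\leq g(\t_i,\t_j)$, so both arguments certify the same underlying fact there; the differences are that (i) your derivation avoids the infimum-attainment and limit-point analysis entirely, and (ii) your treatment of negative virtual values is actually more rigorous than the paper's. Indeed, when $\phi_j<0$ the paper's chain $\t_i/\t_j<(\t_i-r)/(\t_j-r)\leq\phi_i/\phi_j$ breaks down (then $g(\t_i,\t_j)>\t_j$, so $k\mapsto(\t_i-k)/(\t_j-k)$ crosses its pole at $k=\t_j$ between $r$ and $g$), and the ratio-form conclusion of \eqref{e-Rdomprop1} is false in that regime, even though the product inequality $\phi_iw_i>\phi_jw_j$ required for \eqref{e-yz2} still holds---trivially when $\phi_i\geq0>\phi_j$, and via $w_i/w_j<\t_j/\t_i<1<\phi_j/\phi_i$ when both are negative. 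Your affine interpolation handles all sign cases uniformly, which is precisely the obstacle you flagged. Your observation that $r\leq\bart$ is never separately needed is also consistent with the paper: condition \eqref{e-Rdomr} applied at $t=r$ already forces $\phi(r)\leq0$. The one place where you share the paper's informality is the final strictness claim---that $\arule^y$ and $\arule^z$ differ with positive probability implicitly requires non-degeneracy of the relevance distribution---but the paper asserts this step equally casually.
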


\begin{proof}
The proof of Lemma~\ref{l-Rdom} involves showing that conditions
\eqref{e-yz1} and \eqref{e-yz2} are satisfied (see Appendix~\ref{a-Rdom}).
\qed\end{proof}

The only remaining technical detail is that the ranking function
$z(b,w)=\I\{b\geq r\}\,bw$ is not within the class
\eqref{e-y}.  As previously discussed, this means that we need to consider
the possibility that there may exist SNE with a different ranking from
$z(\t_i,w_i)$, to which we cannot directly apply the upper bound
$R(\arule^z)$.  However, our final lemma shows that any such alternate
rankings can only reduce our upper bound on revenue.
%In fact, one can show that a SNE of a GSP auction
%subject to the ranking function $z(b_i,w_i)$ does not always preserve
%the order of the $z(\t_i,w_i)$'s (see Appendix~\ref{a-SNEcounter}). It
%is also not 
%clear that a SNE always exists, nor that if one does exist then there
%is necessarily one that satisfies the NCC. However, we show the
%following lemma:

%iak2: I redid this lemma statment to match the new way the NCC is
%handled.  I also eliminated the \clA notation.
%I think was the only place that the \clA notation  still appeared
%the body of the paper.  Can we fix eliminate it now?
%Please check that I didn't break anything in the process.
%BEN: Looks good to me. Need to be careful about the new way of
%handling the NCC. Might need to make it clear what R(X) is here. Is
%it the expected revenue generated by the selected SNE? Or is it the
%value of the function (4.3) for its resulting allocation scheme? It
%needs to be the latter I would think. Either way it's a bit ambiguous.

% COMMENT: I think this current version is now correct.  We do not need
% to say anything about Proposition~\ref{prop-NCC} here because the new
% definition of $R$ implicitly carries this connection.
\begin{lemma}\label{l-zSNE}
Let $\arule$ be an allocation rule that selects a SNE of a
GSP auction with the ranking function $z(b,w)=\I\{b\geq
r\}\,bw$.  Then $R(\arule^z)\geq R(\arule)$.
\end{lemma}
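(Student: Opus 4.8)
The plan is to prove the inequality realization by realization. Since $R(\arule)=\E[\sum_i \phi(\t_i)w_i\arule_i(\bt,\bw)]$ is linear in the allocation and the expectation decomposes over realizations, it suffices to fix $(\bt,\bw)$ and show that the integrand $\sum_i \phi(\t_i)w_i\arule_i$ is no larger than $\sum_i \phi(\t_i)w_i\arule_i^z$. I would establish this exactly as in Lemma~\ref{l-yz}: exhibit a sequence of pairwise allocation swaps transforming the SNE allocation $\arule$ into the truthful $z$-ordering $\arule^z$, each of which satisfies the hypotheses of Lemma~\ref{l-swap} and hence does not decrease the integrand. Summing the swaps and taking expectations then yields $R(\arule^z)\geq R(\arule)$.

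First I would pin down the participation structure of any SNE of $z$. Because the per-click price charged for an occupied slot is $\max\{r,\,b_{i+1}w_{i+1}/w_i\}\geq r$, a bidder with $\t_i<r$ obtains strictly negative utility from winning and so is never allocated; conversely a bidder is allocated only if active ($b_i\geq r$), which a rational bidder chooses only when $\t_i\geq r$. Hence both $\arule$ and $\arule^z$ allocate slots only to bidders in the set $\{i:\t_i\geq r\}$, giving the analogue of condition~\eqref{e-yz1}. It then remains to compare how the two rules distribute the slots within this common active set.

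The core of the argument is to verify the analogue of condition~\eqref{e-yz2}: if $\arule^z$ ranks active bidder $i$ strictly above active bidder $j$ (so that $\t_iw_i>\t_jw_j$) while the SNE $\arule$ ranks $i$ no higher than $j$, then $\phi(\t_i)w_i>\phi(\t_j)w_j$, so that the correcting swap is revenue improving. Here I would use the SNE inequalities~\eqref{e-SNEineq}, specialised to the price formula above, to constrain exactly when such a disagreement can arise. For interior slots the reserve never enters the relevant prices, so the standard argument (as reused in the proof of Theorem~\ref{thm:lowest}) forces $\arule$ to order bidders by $\t_iw_i$; thus every disagreement must involve the reserve margin, where one of the two relevant prices equals $r$. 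Analysing~\eqref{e-SNEineq} in this regime should pin down the relationship between $\t_i$ and $\t_j$ tightly enough to deduce the required virtual value inequality, after which the swap argument closes the proof, with strictness whenever $\arule$ and $\arule^z$ differ with positive probability.

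The main obstacle is precisely this last step: $\t_iw_i>\t_jw_j$ does \emph{not} by itself imply $\phi(\t_i)w_i>\phi(\t_j)w_j$, since the virtual value gap $\t-\phi(\t)=(1-F(\t))/f(\t)$ depends on $\t$ and the relevances $w_i,w_j$ may differ. The content of the lemma is therefore that the SNE structure of $z$ rules out the ``bad'' disagreements: deviations from the $\t w$-order occur only in configurations (forced by the reserve) in which the virtual value comparison nonetheless points the same way, and crucially this must hold for every $r\in(0,\bart]$ with no further assumption on the distribution. I expect the delicate part to be enumerating the possible reserve-margin disagreements admitted by~\eqref{e-SNEineq} and checking the virtual value inequality in each case.
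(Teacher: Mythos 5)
Your proposal has the right architecture, and it is the same one the paper uses: reduce to a realization-by-realization comparison, reuse the swap machinery of Lemmas~\ref{l-swap} and~\ref{l-yz}, check that the two rules qualify the same set of bidders, and then prove the analogue of \eqref{e-yz2}: whenever the SNE bid order and the $\t_iw_i$ order disagree on a pair, the virtual values satisfy $\phi(\t_i)w_i>\phi(\t_j)w_j$. The problem is that you have left precisely that last implication unproven. Your final two paragraphs correctly identify it as ``the delicate part'' and then merely express the expectation that an enumeration of cases will close it. Since everything else is recycled from Lemma~\ref{l-yz}, that implication \emph{is} the lemma; without it you have an outline, not a proof.

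Worse, the expectation guiding your outline is miscalibrated in a way that would likely prevent you from completing it. You insist the implication must hold ``with no further assumption on the distribution,'' but it is not a distribution-free fact: the paper closes this step by invoking the standing monotone-hazard-rate assumption from Section~\ref{s-model}. Concretely, for a disagreeing pair ($b_jw_j>b_iw_i$, $\t_jw_j<\t_iw_i$, $x_j>x_i$) the paper writes the two SNE inequalities with prices of the form $\max\{rw_i,b_{i+1}w_{i+1}\}$, subtracts them, and notes that the left side $(\t_jw_j-\t_iw_i)(x_j-x_i)$ is negative; a case analysis on the four max-terms then shows the right side can only be negative if the reserve term $rw_j$ binds, which forces $w_j>w_i$ and hence $\t_i>\t_j$. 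Only then does the virtual-value comparison follow: the non-decreasing hazard rate together with $w_i<w_j$ gives $\frac{1-F(\t_i)}{f(\t_i)}w_i\leq\frac{1-F(\t_j)}{f(\t_j)}w_j$, and subtracting this from $\t_iw_i>\t_jw_j$ yields $\phi(\t_i)w_i>\phi(\t_j)w_j$. Note also that the binding of the reserve is governed by relevance asymmetry ($w_j>w_i$), not by proximity to the reserve in the ranking, so your ``interior slots versus reserve margin'' picture is not the right geometry. Two smaller points: the equality of qualifying sets is not a pure consequence of rationality as you claim --- the paper needs explicit behavioral assumptions (bidders with $\t_i\geq r$ bid in $[r,\t_i]$, and losing bidders bid their values); and the conclusion you would obtain this way is $R(\arule^z)\geq R(\arule)$, exactly as stated, with no strictness claim needed.
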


%\begin{lemma}\label{l-zSNE}
%Suppose an auction design $\clA(\bm{b},\bw)$ describes a SNE of a
%GSP auction subject to the ranking function $z(b_i,w_i)=\I\{b_i\geq
%r\}\,b_iw_i$ --- that is, the standard ranking $\{b_iw_i\}$ with a
%reserve price $r$. Also suppose $\clA(\bm{b},\bw)$ satisfies the NCC. Let
%$\clD(\bt,\bw)$ be a mechanism design that allocates slots
%preserving the order of the $z(\t_i,w_i)$'s and charges
%incentive-compatible payments. Then $R(\clD)\geq R(\clA)$.
%\end{lemma}

\begin{proof}
See Appendix~\ref{a-zSNE}.
\qed\end{proof}

The addition of Lemma~\ref{l-zSNE} is sufficient to complete our proof of
Theorem~\ref{t-main}. By Theorem~\ref{thm:lowest},
$R_1(r)=R(\arule^y)$; by Lemma~\ref{l-Rdom},
$R(\arule^y)> R(\arule^z)$; and by Lemma~\ref{l-zSNE},
$R(\arule^z)\geq R_2(r)$. Thus we have $R_1(r)> R_2(r)$. 

\section{Simulations}
\label{s-example}

In this section, we use simulations to examine the performance of our ranking
algorithm and show that it generally dominates existing ranking algorithms.
We examine three metrics: revenue, welfare, and click yield.
Revenue is what the auctioneer cares about (at least in the short term).
Welfare is the total value created for advertisers ($\sum \t_iw_ix_i$), and
the auctioneer also cares about this for the long-term health of the platform.
Similarly, click yield (i.e. the total number of clicks $\sum w_ix_i$) can be
thought of as a proxy for the value created for the users who are clicking
on (presumably) useful ads.

There is one technical detail relevant to
Figs.~\ref{f-ExRevDom}-\ref{f-LahaieClickYield}. As previously
discussed, the standard ranking algorithm coupled with a
reserve price $r$ corresponds to the ranking function
$z(b,w)=\I\{b\geq r\}\,bw$, which is not within the class \eqref{e-y}.
As a consequence, any existing SNE may not be well-behaved. Instead of
trying to characterise such equilibria, we use the relevant
statistics of the optimal BNE which ranks ads by $z(\t_i,w_i)$. By
Lemma~\ref{l-zSNE}, the BNE revenue $R(\arule^z)$ is an
upper bound for the corresponding SNE revenue. Thus, the (red) curves
may display overestimates of the true revenues.

The figures' legends in this section use a simple notational
shorthand to signify the type of reserve --- that is, $r$ denotes a
reserve price and $\rho$ a reserve score (i.e.\ a per-impression reserve).

\begin{figure}[ht]
\begin{minipage}[t]{0.48\linewidth}
\centering
%\psfrag{R}[0][0][1][270]{\hspace{-2mm}\small{$R$}}
\psfrag{r}{\scriptsize{$r$}}
\psfrag{longexpression2}{\scriptsize{$\{b_iw_i\}$ / $r$}}
\psfrag{longexpression1}{\scriptsize{$\{(b_i-r)w_i\}$}}
%\psfrag{longexpression3}{\scriptsize{Optimum}}
\includegraphics[width=1.05\linewidth]{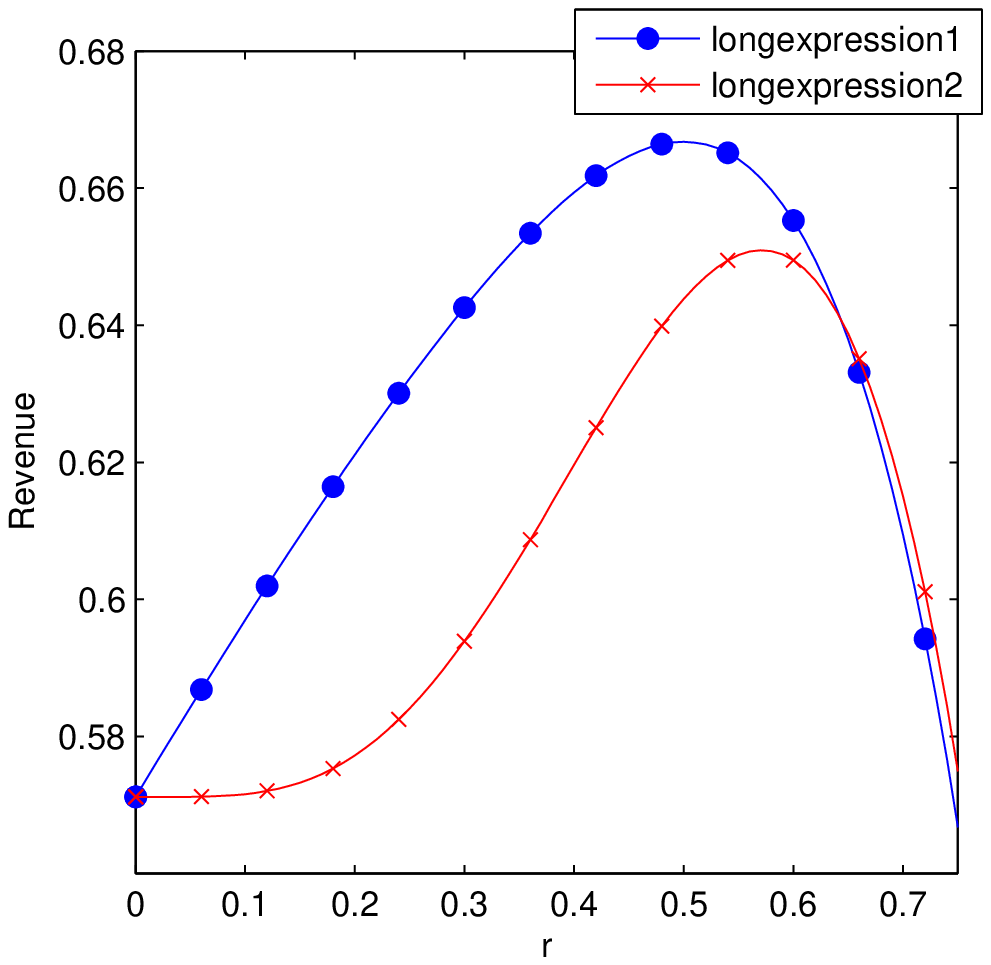}
\caption{\small{Revenue comparison between our proposed algorithm and the
  standard ranking in a simple setting.}}
\label{f-ExRevDom}
\end{minipage}
\hspace{3mm}
\begin{minipage}[t]{0.48\linewidth}
\centering
\psfrag{longexpression3}{\scriptsize{$\{b_iw_i^\a\}$ / $\rho$}}
\psfrag{longexpression1}{\scriptsize{$\{(b_i-r)w_i\}$}}
\psfrag{longexpression2}{\scriptsize{$\{b_iw_i\}$ / $r$}}
\psfrag{longexpression4}{\scriptsize{$\{b_iw_i^\a\}$}}
\psfrag{longexpression5}{\scriptsize{$\{b_iw_i\}$ / $\rho$}}
\includegraphics[width=1.06\linewidth]{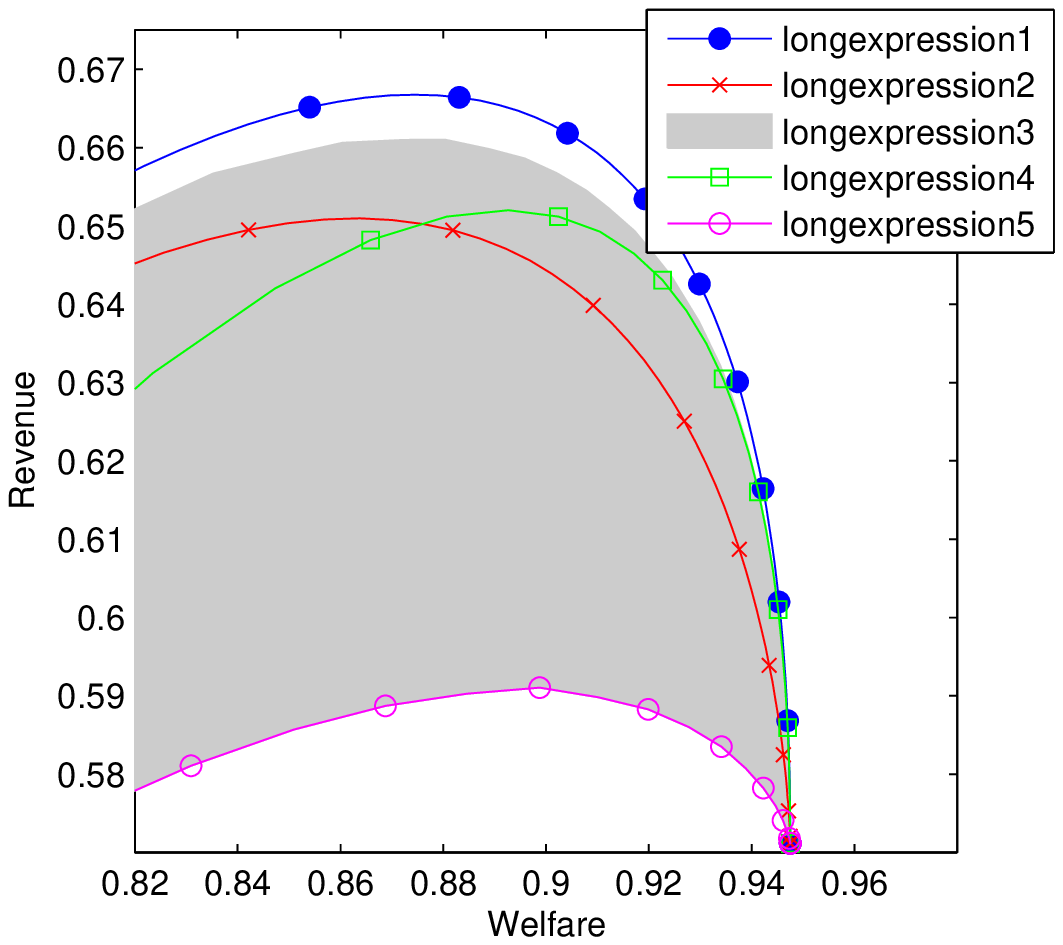}
\caption{\small{Feasible welfare-revenue operating points in a simple setting.}}
\label{f-ExWelfare}
\end{minipage}
\end{figure}

We begin with a simple (albeit unrealistic) example, which satisfies
the distributional assumptions made in Sect.~\ref{s-model}. There are 
eight advertisers bidding for three slots. Advertisers have
i.i.d.\ types $(\t_i,w_i)$ where $\t_i$ and $w_i$ are independent and both
uniformly distributed on $[0,1]$. Figure~\ref{f-ExRevDom} illustrates
Theorem~\ref{t-main} in this setting: for all $r\leq 0.5\, (=\bart)$, our
proposed algorithm of incorporating the reserve price into the ranking
function raises more revenue than the standard ranking.
In this simple setting, we can actually achieve the optimal revenue at
$r=0.5$.

However, Theorem~\ref{t-main} does not tell us what the cost of this
added revenue is in terms of welfare.  Figure~\ref{f-ExWelfare}
shows us that this revenue is essentially free: for any welfare
we desire, we can achieve more revenue with our ranking algorithm.
Note that this does not mean that with the same reserve price
our algorithm is more efficient. Instead, if separate reserve prices are chosen
such that both algorithms have the same welfare, our algorithm has higher
revenue.

In Fig.~\ref{f-ExWelfare}, we also compare performance against
Lahaie and Pennock's~\shortcite{lahaie} squashed ranking algorithm with reserve score
$\rho$ (i.e.\ $y(b,w)=(bw^\a-\rho)^+$).  Since there are two
parameters, the operating points form the entire shaded region.
Our algorithm leads to a set of operating points that dominates this algorithm
as well. Two special cases of this ranking algorithm are 
the squashing ranking algorithm with no reserve ($\rho = 0$) and the standard
ranking algorithm with a reserve score ($\alpha = 1$). The latter is particularly
interesting to compare to the standard ranking algorithm with a reserve
price. We observe that for identical pre-reserve rankings, the addition of a reserve
price dominates the alternative option of a reserve score. Despite the
fact that the plotted revenues of the 
standard ranking with a reserve price may be overestimates, this
still suggests that it is generally better to use reserve prices
than reserve scores. Figure~\ref{f-ExClickYield} shows that these results do
not change if we examine click yield rather than welfare.

\begin{figure}[ht]
\begin{minipage}[t]{0.48\linewidth}
\centering
\psfrag{longexpression3}{\scriptsize{$\{b_iw_i^\a\}$ / $\rho$}}
\psfrag{longexpression1}{\scriptsize{$\{(b_i-r)w_i\}$}}
\psfrag{longexpression2}{\scriptsize{$\{b_iw_i\}$ / $r$}}
\psfrag{longexpression4}{\scriptsize{$\{b_iw_i^\a\}$}}
\psfrag{longexpression5}{\scriptsize{$\{b_iw_i\}$ / $\rho$}}
\includegraphics[width=1.06\linewidth]{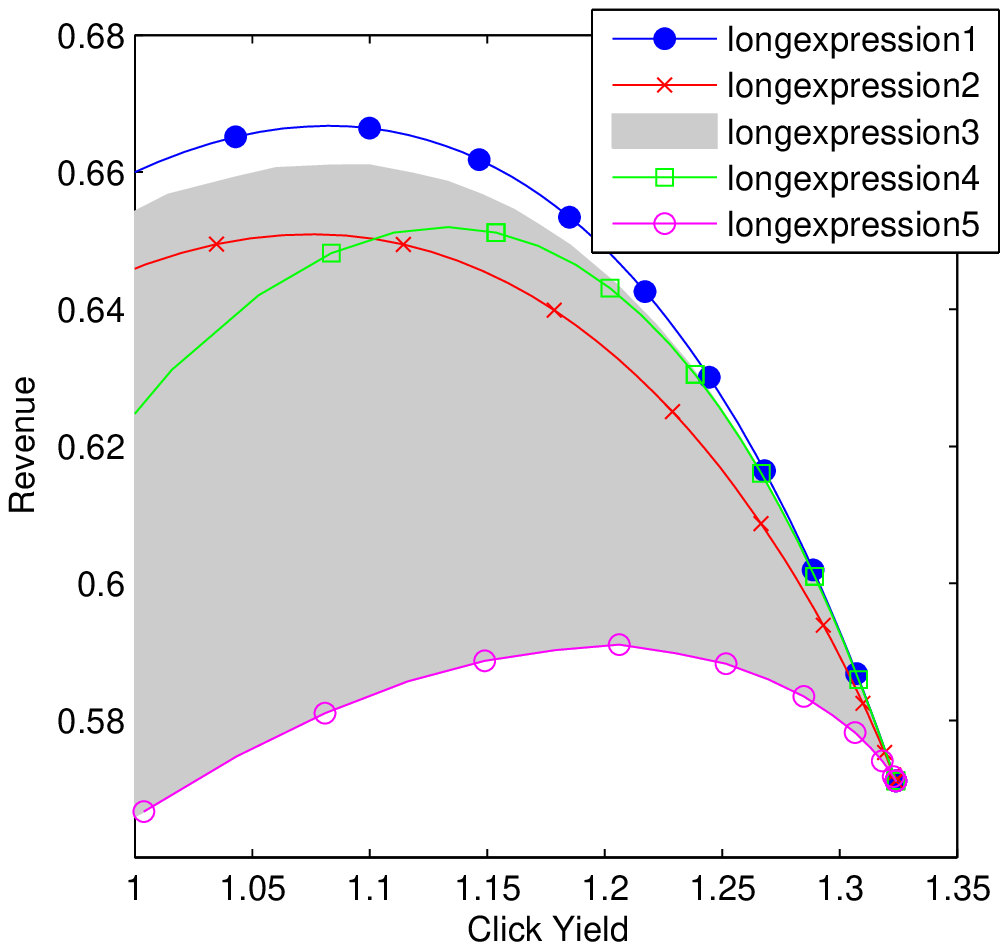}
\caption{\small{Feasible click yield-revenue operating points in a simple setting.}}
\label{f-ExClickYield}
\end{minipage}
\hspace{3mm}
\begin{minipage}[t]{0.48\linewidth}
\centering
\psfrag{r}{\scriptsize{$r$}}
\psfrag{longexpression2}{\scriptsize{$\{b_iw_i\}$ / $r$}}
\psfrag{longexpression1}{\scriptsize{$\{(b_i-r)w_i\}$}}
\includegraphics[width=1.05\linewidth]{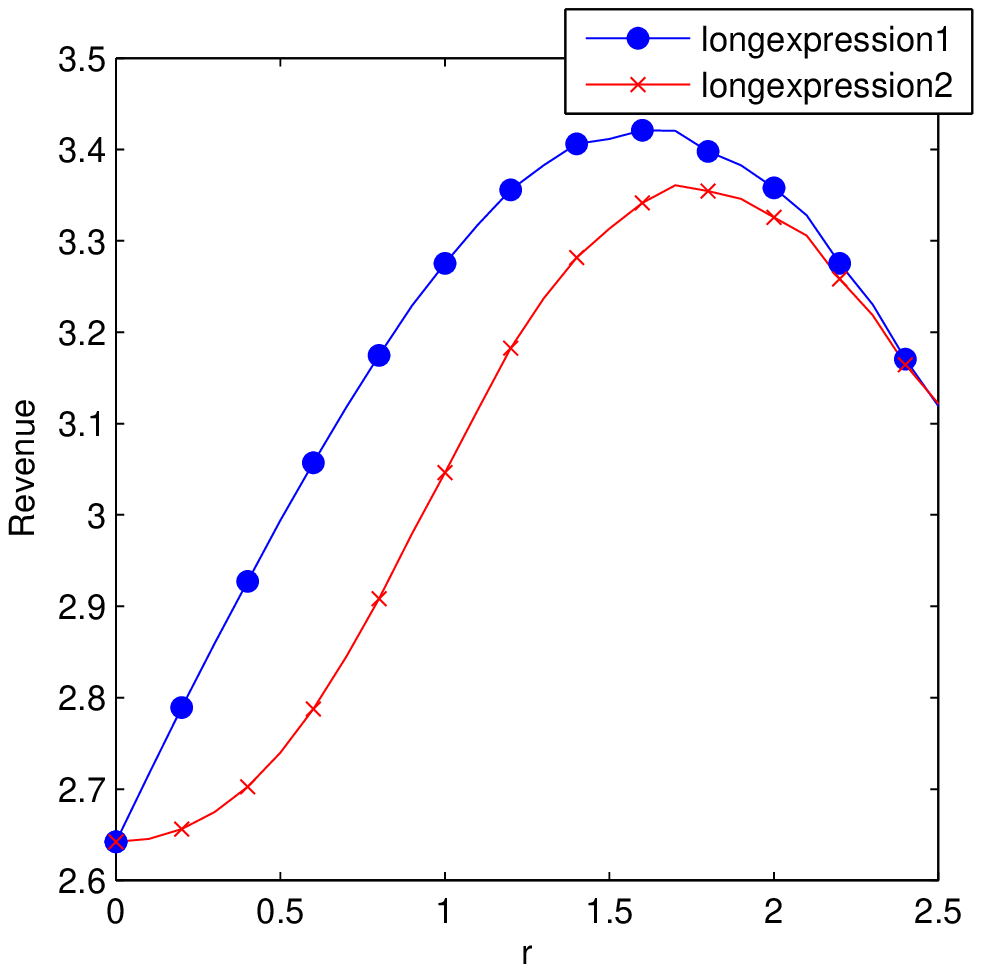}
\caption{\small{Revenue comparison between our proposed algorithm and the
  standard ranking in Lahaie and Pennock's setting.}}
\label{f-LahaieRevDom}
\end{minipage}
\end{figure}

Lahaie and Pennock~\shortcite{lahaie} examined the performance of the squashed ranking in a
more realistic setting, which they selected by fitting Yahoo! data from a particular query.
This distribution violates several of our assumptions. Bidder valuations have a lognormal
distribution, which does not have a monotone hazard rate.  Values are also correlated with
relevance.  Nevertheless,
Figs.~\ref{f-LahaieRevDom}-\ref{f-LahaieClickYield} show that the
results from our simple setting are essentially unchanged, with our
proposed algorithm of incorporating a reserve price into the ranking function
offering superior tradeoffs.

\begin{figure}[ht]
\begin{minipage}[t]{0.48\linewidth}
\centering
\psfrag{longexpression3}{\scriptsize{$\{b_iw_i^\a\}$ / $\rho$}}
\psfrag{longexpression1}{\scriptsize{$\{(b_i-r)w_i\}$}}
\psfrag{longexpression2}{\scriptsize{$\{b_iw_i\}$ / $r$}}
\psfrag{longexpression4}{\scriptsize{$\{b_iw_i^\a\}$}}
\psfrag{longexpression5}{\scriptsize{$\{b_iw_i\}$ / $\rho$}}
\includegraphics[width=1.07\linewidth]{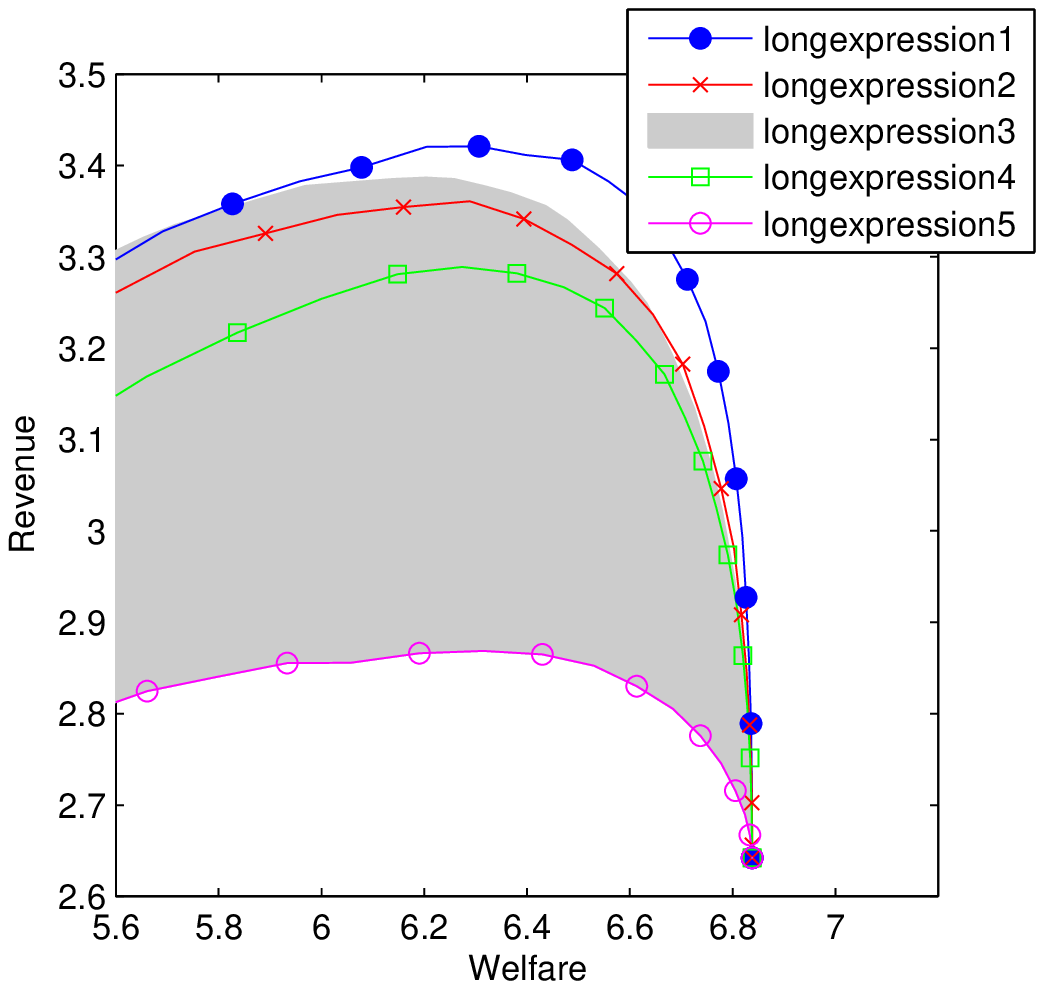}
\caption{\small{Feasible welfare-revenue operating points in Lahaie and Pennock's setting.}}
\label{f-LahaieWelfare}
\end{minipage}
\hspace{3mm}
\begin{minipage}[t]{0.48\linewidth}
\centering
\psfrag{longexpression3}{\scriptsize{$\{b_iw_i^\a\}$ / $\rho$}}
\psfrag{longexpression1}{\scriptsize{$\{(b_i-r)w_i\}$}}
\psfrag{longexpression2}{\scriptsize{$\{b_iw_i\}$ / $r$}}
\psfrag{longexpression4}{\scriptsize{$\{b_iw_i^\a\}$}}
\psfrag{longexpression5}{\scriptsize{$\{b_iw_i\}$ / $\rho$}}
\includegraphics[width=1.05\linewidth]{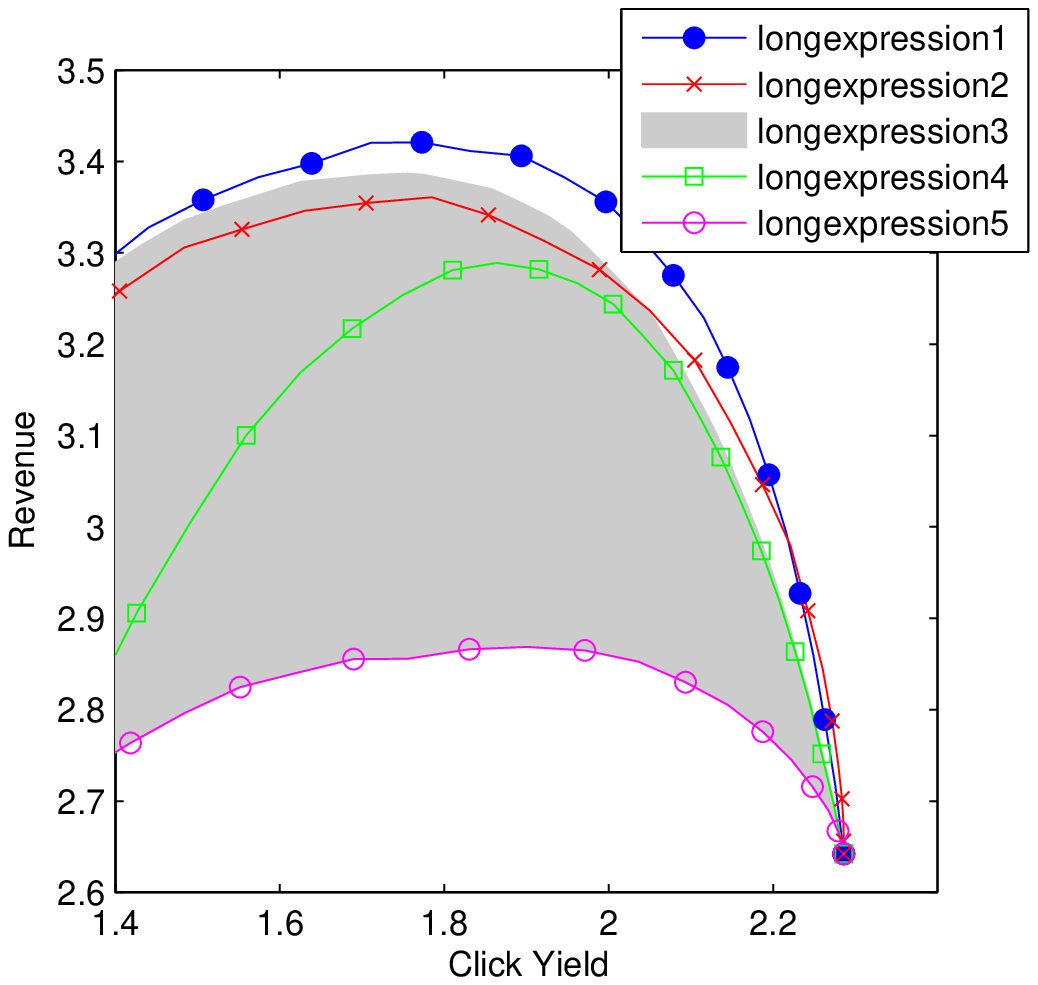}
\caption{\small{Feasible click yield-revenue operating points in Lahaie and Pennock's setting.}}
\label{f-LahaieClickYield}
\end{minipage}
\end{figure}

%iak4: Is there anything public we can cite here?
Finally, all of these results are based on the assumption that bidders are
in equilibrium.  In reality, if parameters are changed, the algorithm may take
some time to reach the new equilibrium, and there is empirical evidence that some
advertisers react quite slowly to changes.  Therefore, a natural
question is what happens when the ranking algorithm is changed but
advertisers do not react?  If the short-term effect is revenue-positive or
revenue-neutral, it is much easier for the auctioneer to be patient.
Furthermore, by not requiring an equilibrium analysis, we can examine
the performance of different ranking algorithms on historical data,
which has many realistic features not captured by our simple model
(e.g.\ changing bidders, matching of bids to multiple queries, and
stochastic quality scores).

Figure~\ref{f-thick} shows the effect on revenue of changing from
from the standard ranking algorithm to our proposed ranking algorithm
while keeping the reserve price fixed on Microsoft historical data
for a keyword with over 500 bidders, which we have selected as
representative of a ``thick'' market.  The data has been normalised, but
the exact values are not relevant for our purposes.  In such markets,
incorporating the reserve price into the ranking function seems to
consistently increase revenue.

\begin{figure}[ht]
\begin{minipage}[t]{0.48\linewidth}
\centering
\psfrag{r}{\scriptsize{$r$}}
\psfrag{R}[0][0][1][270]{\hspace{-2mm}\small{$R$}}
\psfrag{longexpression2}{\scriptsize{$\{b_iw_i\}$ / $r$}}
\psfrag{longexpression1}{\scriptsize{$\{(b_i-r)w_i\}$}}
\includegraphics[width=1.05\linewidth]{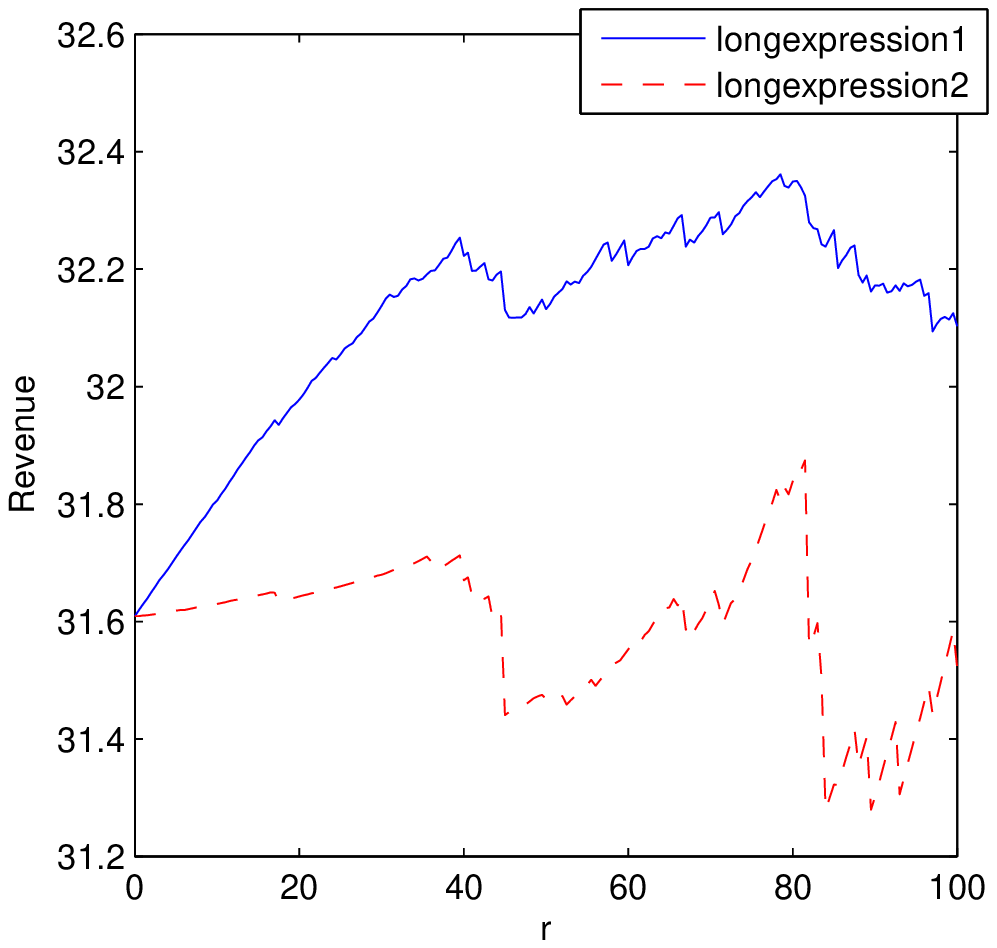}
\caption{\small{Revenue comparison between our proposed algorithm and the
  standard ranking in thick market auction replays.}}
\label{f-thick}
\end{minipage}
\hspace{3mm}
\begin{minipage}[t]{0.48\linewidth}
\centering
\psfrag{r}{\scriptsize{$r$}}
\psfrag{longexpression2}{\scriptsize{$\{b_iw_i\}$ / $r$}}
\psfrag{longexpression1}{\scriptsize{$\{(b_i-r)w_i\}$}}
\psfrag{longexpression3}{\scriptsize{Bid hist.}}
\includegraphics[width=1.05\linewidth]{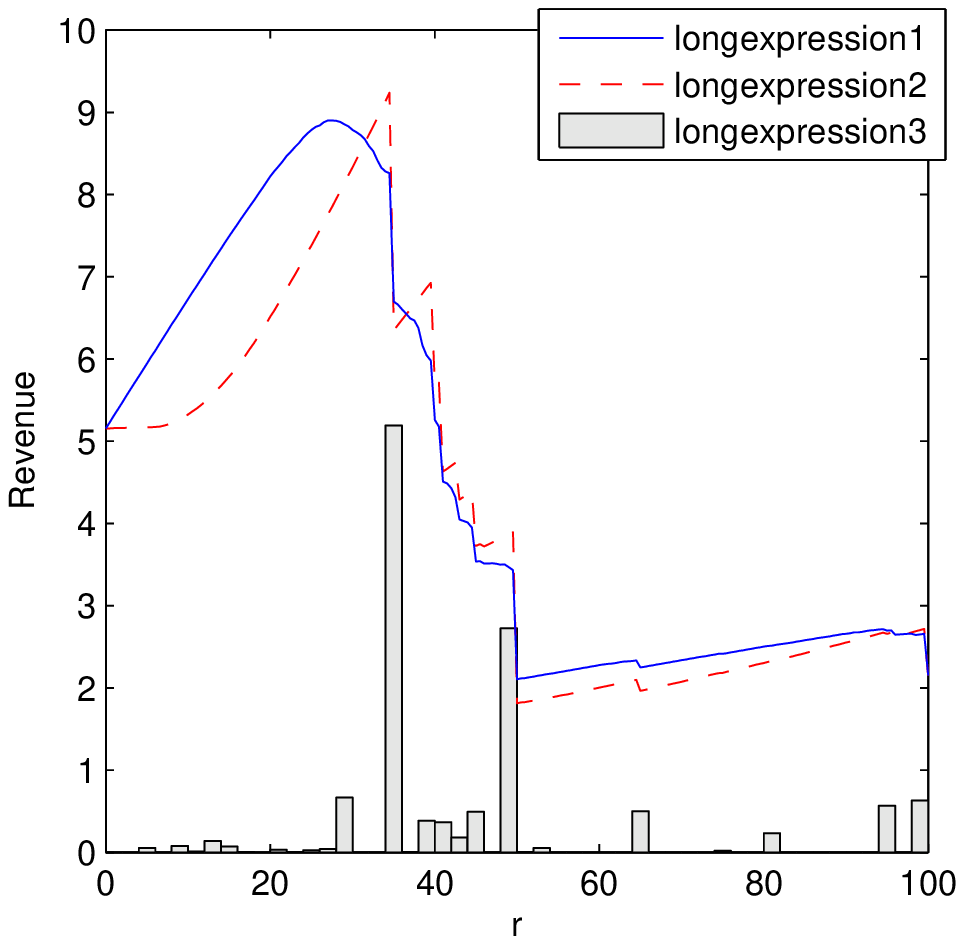}
\caption{\small{Revenue comparison between our proposed algorithm and the
  standard ranking in thin market auction replays.}}
\label{f-thin}
\end{minipage}
\end{figure}

Figure~\ref{f-thin} is a similar plot of a ``thin'' market with fewer
than 10 bidders.  Here, at certain values, the standard ranking
raises somewhat more revenue.  The included histogram
of bid frequencies suggests an explanation for this: setting the
reserve price at a common bid makes those bidders pay their full
value, while the standard ordering ranks them highly to extract
as much revenue as possible.  In practice, such reserve prices
are unlikely to be chosen, as setting a reserve price to match
common bids would essentially make that auction first price, as
well as being very sensitive to small changes in bid.  At more
reasonable choices of reserve price, our proposed algorithm of
incorporating it into the ranking function yields greater revenue.

Figures~\ref{f-thick} and~\ref{f-thin} also demonstrate several advantages of our ranking algorithm from an optimisation perspective.  First, the blue lines are ``smoother'', which creates a somewhat easier problem.  Second, the fact that bidders near the reserve price have low rank scores means that the revenue from an advertiser begins to decrease before the reserve price is actually raised past his bid.  This reduces the tendency of optimisation to overfit and choose a reserve price directly below an advertiser's bid.

\begin{figure}[ht]
\centering
\psfrag{longexpression1}{\scriptsize{$\{(b_i-r)w_i\}$}}
\psfrag{longexpression2}{\scriptsize{$\{b_iw_i\}$ / $r$}}
\psfrag{longexpression5}{\scriptsize{$\{b_iw_i\}$ / $\rho$}}
\includegraphics[width=0.7\linewidth]{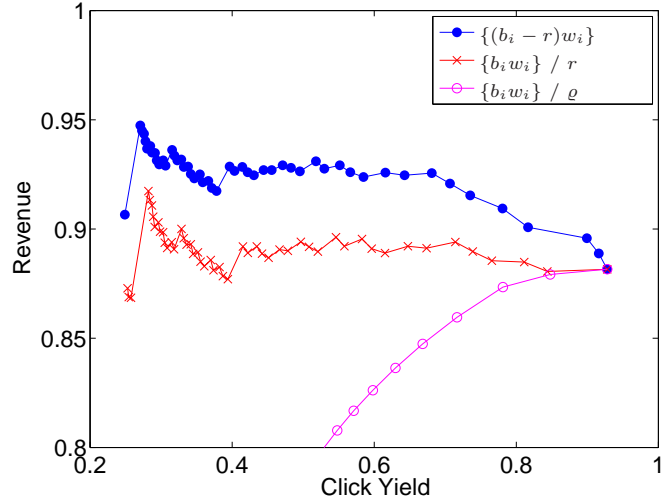}
\caption{\small{Feasible click yield-revenue operating points on Microsoft data.}}
\label{f-Whole}
\end{figure}

To examine the tradeoff between revenue and click yield, we tested a
subset of the ranking algorithms using global parameter settings on a
sample of a week's worth of Microsoft data across all queries.  As there is
a minimum bid of 5 cents in the actual system, an implicit reserve
price of 5 cents is applied to all ranking algorithms in addition to
any other parameters.  Figure~\ref{f-Whole} shows that incorporating the reserve
price into the ranking function results in a better tradeoff than
using the reserve price solely as a minimum bid. As in the single
query case, the standard ranking experiences bigger peaks and drops as
the reserve price approaches common bids (in this figure increasing
the reserve corresponds to moving right to left). Interestingly, a reserve score does not generate a
useful tradeoff  (at least when set globally) as increasing it reduces
both revenue and clicks.  Any gains from raising the price the last ad
shown pays are more than offset by the lower number of clicks.  We
cannot investigate the tradeoff between revenue and welfare as we do
not know the true values of advertisers.

\section{Related Work}
%iak4: this isn't really relevant
%Our model of sponsored search auctions
%operates under the assumption that an ad's \textit{click-through rate}
%(CTR) is the product of its relevance $w_i$ and a \textit{slot effect}
%$s_k$, and furthermore is independent of the other ads
%displayed in the same impression (we refer to this as the
%\textit{independence model}). Despite its simplistic nature, the
%independence model is commonly used in
%the sponsored search literature,
%e.g.\ \cite{abrams,athey,edelman,edelman2,gomes,lahaie,ostro,pin,varian,voro,yenmez}. Attempts
%have been made to model externalities between ads
%displayed in the same impression, the most prominent of which is the
%the \textit{cascade model} (see for example \cite{craswell,kempe}). In the
%cascade model, a user scans the page from top to bottom until
%(possibly) deciding to click on a link. Not only does it make analysis
%difficult, but Jeziorski and Segal~\cite{jez} 
%argued that the cascade model is not generally supported by empirical
%evidence anyway.

A number of papers have examined auctions other than GSP designed for
sponsored search auctions.
Abrams and Ghosh~\shortcite{abrams}
considered the design of auctions that guarantee revenue to be at least some
fixed proportion of the optimal revenue that can 
be generated by an omniscient designer, something
that is not achieved by the GSP auction. Yenmez~\shortcite{yenmez} studied
equilibrium properties of general auction designs, including looking at conditions
under which the VCG outcome is implementable in equilibrium.

GSP auction literature is generally divided between Bayesian and
non-Bayesian analysis. In the former, Gomes and
Sweeney~\shortcite{gomes} pursued existence and 
uniqueness results regarding BNE of the GSP
auction. Pin and Key~\shortcite{pin} and Athey and Nekipelov~\shortcite{athey}
proposed separate models from which advertisers'
values can be inferred from bid data.   Pin and Key derive an advertiser best-response
 in the face of uncertainty in  a repeated auction setting, and make a connection to 
certain scenarios
when the BNE of \cite{gomes} exist;  whereas Athey and Nekipelov
start directly  from a BNE setting.
Both models can be interpreted in a Bayesian setting, 
but differ in the precise information
available to advertisers.
Athey and Nekipelov do consider the the tradeoffs from introducing various changing to the ranking algorithm, but only consider one particular choice of parameters rather than the full range of operating points.
Athey and Ellison~\shortcite{athey2}
examined tradeoffs in the context of a model of consumer search where
reserve prices can actually increase welfare. In their model a reserve
price changes consumer perceptions of an ad's relevance, which can lead
to more overall clicks.
Vorobeychik~\shortcite{voro} presented a
simulation-based technique to approximate a BNE of the GSP auction,
made feasible by restricting the space of possible bidding
strategies. 
Liu and Chen~\shortcite{liu} and Liu et al.~\shortcite{liu2} compare the designs of the welfare-optimal and revenue-optimal auctions in a simplified settings where there are only two types of advertisers, but also provide some insight into settings with more than two types.

Analysis of the GSP auction in the complete information setting was
pioneered independently by Edelman et al.~\shortcite{edelman}, 
Varian~\shortcite{varian}, and Aggarwal et al.~\shortcite{aggarwal06}.
All three papers note the striking equivalence result
between the VCG outcome and one specific ex-post equilibrium of the
GSP auction --- that is, the `lowest SNE' or `bidder-optimal locally
envy free equilibrium'. Ashlagi et al.~\shortcite{ashlagi} showed that this
result holds more generally. That is, for many auction designs in
which advertisers' payments are calculated as functions of lower-ranked bids, 
there exists a specific equilibrium with payoffs equivalent to the VCG
outcome. Paes Leme and Tardos~\shortcite{paesleme} note that the set of
ex-post equilibria of the GSP auction is larger than the set of SNE, and may contain
inefficient equilibria. For both the complete-information and Bayesian
settings, they investigated the maximum possible loss of
welfare --- that is, they found upper bounds on the \emph{price of
  anarchy}. Their results have since been strengthened in a line of work
that includes~\cite{lucier,cara,cara2}.

Edelman and Schwarz~\shortcite{edelman2}
examined the revenue of various SNE and argue that the `lowest'
is the only reasonable one based on a criterion that compares
SNE revenue to the optimal revenue possible in the Bayesian
setting. They use the resulting insight
to argue that that reserve prices raise more revenue
through their indirect effects (increasing equilibrium bids)
than through their direct effects (raising prices paid) and that
going from a zero reserve price to the optimal reserve
price leads to only a small welfare loss while significantly
increasing revenue. 

Ostrovsky and Schwarz~\shortcite{ostro} presented the results of a field
experiment aimed at testing the effects of employing
Myerson's~\shortcite{myerson} optimal reserve
price in GSP auctions. Working in conjunction with Yahoo!, they used
historical bid data for a large number of queries to estimate
distributions of advertisers' values and subsequently optimal reserve
prices. After employing the new reserve prices, they observed substantial
increases in revenue.  However, their reserves were implemented as
minimum scores and were not used to change to ordering of ads.

Lahaie and Pennock~\shortcite{lahaie} suggest a different method of
increasing revenue. They found that incorporating a squashing
exponent $\a<1$ into the ranking algorithm $\{b_iw_i^\a\}$ generally has a
positive effect on revenue. They used simulations to show that the
optimal value of $\a$ is sensitive to the correlation between
advertisers' values and relevances. They observed that introducing a
squashing exponent to
increase revenue incurs a smaller welfare loss than that of setting
a reserve score. Thus, they argue their method to be preferential. Further
strengthening this justification, Lahaie
and McAfee~\shortcite{lahaie2} showed that when there is uncertainty in the
estimates of advertisers' relevances, introducing a squashing exponent
can actually \emph{increase} welfare by reducing the weight placed on these uncertain estimates.
While we have not explored this issue for our proposed ranking algorithm, it certainly has the same effect, which makes it plausible that their conclusion would apply to our algorithm as well.

\section{Conclusion}

In this paper, we have examined the tradeoffs between revenue,
welfare, and click yield enabled by the choice of algorithm to rank
sponsored search ads.
In developing our solution concept, we have extended the connection between the
lowest SNE  and the VCG mechanism \cite{varian,edelman,aggarwal06} to a much wider class of ranking
algorithms. As this class includes inefficient rankings, our extension instead
establishes an equivalence between the lowest SNE and the corresponding
Myerson~\shortcite{myerson} outcome, of which the SNE/VCG
result is a special case.

We then proposed a new class of ranking algorithms that
incorporates a reserve price/minimum bid
into the ranking algorithm that governs the ordering of advertisers
(not just whether they appear at all), a
system which shares several qualitative features with the
revenue-optimal auction. Our proposed ranking essentially combines two methods of
increasing revenue previously discussed in literature --- namely,
squashing~\cite{lahaie} and setting a
reserve~\cite{edelman,ostro}.  Where previously these two methods have
been implemented using  separate parameters, our algorithm achieves the aims of  both with a
single parameter.

We derived conditions under which, for a
fixed reserve price, our proposed ranking algorithm generates greater
revenue than the standard ranking. This comparison is particularly
informative, as it isolates the effect of incorporating the reserve
price into the ranking function. It also provides intuition for why
our ranking enables good tradeoffs: to raise a given amount of
revenue our ranking algorithm can use a lower (less distortionary)
reserve price.
Our theorem work relies on various
distributional assumptions that may not hold in reality, however we
expect the comparison to be very similar in many cases which violate
the underlying assumptions.

We finished with extensive simulations of the tradeoffs enabled by different ranking algorithms.  We used two types of simulation: one simulating an reactive environment where users react to changes in auction design by forming revised equilibria; the other using real auction logs which assumes advertisers do not have time to react to changes, but which captures all the vagaries of real auctions, where bids change in the light of budget constraints and changing users, and where underlying quality factors are stochastic.    These simulations show that our proposed ranking algorithm enables superior tradeoffs using a variety of metrics.  In particular, to achieve a fixed revenue increase, our ranking algorithm incurred a
smaller loss of welfare and click yield than the alternative rankings.
These simulations also showed that our proposed ranking
has several nice properties from an optimisation perspective.

\bibliography{references}

\begin{thebibliography}{}

\bibitem[\protect\citeauthoryear{Abrams and Ghosh}{Abrams and
  Ghosh}{2007}]{abrams}
{\sc Abrams, Z.} {\sc and} {\sc Ghosh, A.} 2007.
\newblock Auctions with revenue guarantees for sponsored search.
\newblock In {\em Proc. 3rd International Conference on Internet and Network
  Economics}. WINE'07. San Diego, California, 143--154.

\bibitem[\protect\citeauthoryear{Aggarwal, Goel, and Motwani}{Aggarwal
  et~al\mbox{.}}{2006}]{aggarwal06}
{\sc Aggarwal, G.}, {\sc Goel, A.}, {\sc and} {\sc Motwani, R.} 2006.
\newblock Truthful auctions for pricing search keywords.
\newblock In {\em ACM Conference on Electronic Commerce}. 1--7.

\bibitem[\protect\citeauthoryear{Ashlagi, Monderer, and Tennenholtz}{Ashlagi
  et~al\mbox{.}}{2009}]{ashlagi}
{\sc Ashlagi, I.}, {\sc Monderer, D.}, {\sc and} {\sc Tennenholtz, M.} 2009.
\newblock Mediators in position auctions.
\newblock {\em Games and Economic Behavior\/}~{\em 67,\/}~1, 2--21.

\bibitem[\protect\citeauthoryear{Athey and Ellison}{Athey and
  Ellison}{2011}]{athey2}
{\sc Athey, S.} {\sc and} {\sc Ellison, G.} 2011.
\newblock Position auctions with consumer search.
\newblock {\em The Quarterly Journal of Economics\/}~{\em 126,\/}~3,
  1213--1270.

\bibitem[\protect\citeauthoryear{Athey and Nekipelov}{Athey and
  Nekipelov}{2012}]{athey}
{\sc Athey, S.} {\sc and} {\sc Nekipelov, D.} 2012.
\newblock A structural model of sponsored search advertising auctions.
\newblock Working paper.

\bibitem[\protect\citeauthoryear{Bagnoli and Bergstrom}{Bagnoli and
  Bergstrom}{2005}]{bagnoli}
{\sc Bagnoli, M.} {\sc and} {\sc Bergstrom, T.} 2005.
\newblock Log-concave probability and its applications.
\newblock {\em Economic Theory\/}~{\em 26}, 445--469.

\bibitem[\protect\citeauthoryear{Caragiannis, Kaklamanis, Kanellopoulos, and
  Kyropoulou}{Caragiannis et~al\mbox{.}}{2011}]{cara}
{\sc Caragiannis, I.}, {\sc Kaklamanis, C.}, {\sc Kanellopoulos, P.}, {\sc and}
  {\sc Kyropoulou, M.} 2011.
\newblock On the efficiency of equilibria in generalized second price auctions.
\newblock In {\em Proc.\ 12th ACM conference on Electronic Commerce}. EC'11.
  San Jose, California, 81--90.

\bibitem[\protect\citeauthoryear{Caragiannis, Kaklamanis, Kanellopoulos,
  Kyropoulou, Lucier, Leme, and Tardos}{Caragiannis
  et~al\mbox{.}}{2012}]{cara2}
{\sc Caragiannis, I.}, {\sc Kaklamanis, C.}, {\sc Kanellopoulos, P.}, {\sc
  Kyropoulou, M.}, {\sc Lucier, B.}, {\sc Leme, R.~P.}, {\sc and} {\sc Tardos,
  {\'E}.} 2012.
\newblock On the efficiency of equilibria in generalized second price auctions.
\newblock {\em CoRR\/}~{\em abs/1201.6429}.

\bibitem[\protect\citeauthoryear{Edelman, Ostrovsky, and Schwarz}{Edelman
  et~al\mbox{.}}{2007}]{edelman}
{\sc Edelman, B.}, {\sc Ostrovsky, M.}, {\sc and} {\sc Schwarz, M.} 2007.
\newblock Internet advertising and the {G}eneralized {S}econd-{P}rice auction:
  {S}elling billions of dollars worth of keywords.
\newblock {\em American Economic Review\/}~{\em 97,\/}~1, 242--259.

\bibitem[\protect\citeauthoryear{Edelman and Schwarz}{Edelman and
  Schwarz}{2010}]{edelman2}
{\sc Edelman, B.} {\sc and} {\sc Schwarz, M.} 2010.
\newblock Optimal auction design and equilibrium selection in sponsored search
  auctions.
\newblock {\em American Economic Review\/}~{\em 100,\/}~2.

\bibitem[\protect\citeauthoryear{Gomes and Sweeney}{Gomes and
  Sweeney}{2009}]{gomes}
{\sc Gomes, R.} {\sc and} {\sc Sweeney, K.} 2009.
\newblock Bayes--{N}ash equilibria of the {G}eneralized {S}econd-{P}rice
  auction.
\newblock In {\em Proc.\ 10th ACM conference on Electronic Commerce}. EC'09.
  Stanford, California, 107--108.

\bibitem[\protect\citeauthoryear{Lahaie and McAfee}{Lahaie and
  McAfee}{2011}]{lahaie2}
{\sc Lahaie, S.} {\sc and} {\sc McAfee, R.~P.} 2011.
\newblock Efficient ranking in sponsored search.
\newblock In {\em Proc. 7th international conference on Internet and Network
  Economics}. WINE'11. Singapore, 254--265.

\bibitem[\protect\citeauthoryear{Lahaie and Pennock}{Lahaie and
  Pennock}{2007}]{lahaie}
{\sc Lahaie, S.} {\sc and} {\sc Pennock, D.~M.} 2007.
\newblock Revenue analysis of a family of ranking rules for keyword auctions.
\newblock In {\em Proc.\ 8th ACM conference on Electronic Commerce}. EC'07. San
  Diego, California, 50--56.

\bibitem[\protect\citeauthoryear{Liu and Chen}{Liu and Chen}{2006}]{liu}
{\sc Liu, D.} {\sc and} {\sc Chen, J.} 2006.
\newblock Designing online auctions with past performance information.
\newblock {\em Decision Support Systems\/}~{\em 42,\/}~3, 1307--1320.

\bibitem[\protect\citeauthoryear{Liu, Chen, and Whinston}{Liu
  et~al\mbox{.}}{2010}]{liu2}
{\sc Liu, D.}, {\sc Chen, J.}, {\sc and} {\sc Whinston, A.~B.} 2010.
\newblock Ex ante information and the design of keyword auctions.
\newblock {\em Information Systems Research\/}~{\em 21,\/}~1, 133--153.

\bibitem[\protect\citeauthoryear{Lucier, Leme, and Tardos}{Lucier
  et~al\mbox{.}}{2012}]{lucier}
{\sc Lucier, B.}, {\sc Leme, R.~P.}, {\sc and} {\sc Tardos, {\'E}.} 2012.
\newblock On revenue in the generalized second price auction.
\newblock In {\em WWW}. 361--370.

\bibitem[\protect\citeauthoryear{Myerson}{Myerson}{1981}]{myerson}
{\sc Myerson, R.} 1981.
\newblock Optimal auction design.
\newblock {\em Mathematics of Operations Research\/}~{\em 6,\/}~1, 58--73.

\bibitem[\protect\citeauthoryear{Ostrovsky and Schwarz}{Ostrovsky and
  Schwarz}{2011}]{ostro}
{\sc Ostrovsky, M.} {\sc and} {\sc Schwarz, M.} 2011.
\newblock Reserve prices in internet advertising auctions: a field experiment.
\newblock In {\em Proc.\ 12th ACM conference on Electronic Commerce}. EC'11.
  San Jose, California, 59--60.

\bibitem[\protect\citeauthoryear{{Paes Leme} and Tardos}{{Paes Leme} and
  Tardos}{2010}]{paesleme}
{\sc {Paes Leme}, R.} {\sc and} {\sc Tardos, {\'E}.} 2010.
\newblock Pure and {B}ayes--{N}ash {P}rice of {A}narchy for {G}eneralized
  {S}econd {P}rice auction.
\newblock In {\em Proc.\ 51st Annual IEEE Symposium on Foundations of Computer
  Science}. FOCS'10. Las Vegas, Nevada, 735--744.

\bibitem[\protect\citeauthoryear{Pin and Key}{Pin and Key}{2011}]{pin}
{\sc Pin, F.} {\sc and} {\sc Key, P.} 2011.
\newblock Stochastic variability in sponsored search auctions.
\newblock In {\em Proc.\ 12th ACM conference on Electronic Commerce}. EC'11.
  San Jose, California, 61--70.

\bibitem[\protect\citeauthoryear{Varian}{Varian}{2007}]{varian}
{\sc Varian, H.~R.} 2007.
\newblock Position auctions.
\newblock {\em International Journal of Industrial Organization\/}~{\em
  25,\/}~6, 1163--1178.

\bibitem[\protect\citeauthoryear{Vorobeychik}{Vorobeychik}{2009}]{voro}
{\sc Vorobeychik, Y.} 2009.
\newblock Simulation-based game theoretic analysis of keyword auctions with
  low-dimensional bidding strategies.
\newblock In {\em Proc. 25th Annual Conference on Uncertainty in Artificial
  Intelligence}. UAI'09. Corvallis, Oregon, 583--590.

\bibitem[\protect\citeauthoryear{Yenmez}{Yenmez}{2010}]{yenmez}
{\sc Yenmez, M.~B.} 2010.
\newblock Pricing in position auctions and online advertising.
\newblock Working paper.

\end{thebibliography}

%
%NO APPENDIX FOR CAMERA READY
%
%\end{document}

\newpage
\section*{Appendix}
\appendix

%iak3: this example now comes before any of the proofs in the main text
\section{Non-Existence of an Order-Preserving SNE}
\label{a-SNEcounter}
For the standard ranking algorithm with a reserve price (i.e.\
$z(b,w)=\I\{b\geq r\}\,bw$), we show that a SNE cannot always rank ads
by $z(\t_i,w_i)$, in contrast to SNE under ranking
algorithms within the class \eqref{e-y}. The SNE inequalities~\eqref{e-SNEineq}
can be written as
\beq\label{ea-zSNEineq}
\bigl(\t_iw_i-\max\{rw_i,b_{i+1}w_{i+1}\}\bigr)x_i\geq
\bigl(\t_iw_i-\max\{rw_i,b_{j+1}w_{j+1}\}\bigr)x_j\quad\text{for
  all}\,\,i,j\,.
\eeq
Consider the following realisation. There are precisely two
advertisers who submit qualifying bids ($b_i\geq
r$), with bidder $1$ being awarded the top slot and bidder $2$ the second
($b_1w_1\geq b_2w_2$ and $x_1>x_2$). Bidder $1$ is
less relevant ($w_1<w_2$).

Suppose a SNE always ranks ads by $z(\t_i,w_i)$,
so that $\t_1w_1\geq \t_2w_2$. The bids $(b_1,b_2)$ must satisfy the inequalities
\begin{align}
&\bigl(\t_1w_1-\max\{rw_1,b_2w_2\}\bigr)x_1\geq(\t_1w_1-rw_1)x_2\,,\label{ea-SNEcounter1}\\
&(\t_2w_2-rw_2)x_2\geq\bigl(\t_2w_2-\max\{rw_2,b_2w_2\}\bigr)x_1\,.\label{ea-SNEcounter2}
\end{align}
It is necessary that $b_2>r$ in order to satisfy
\eqref{ea-SNEcounter2}, and as $w_1<w_2$, we have
$\max\{rw_1,b_2w_2\}=\max\{rw_2,b_2w_2\}=b_2w_2$. Then inequalities
\eqref{ea-SNEcounter1} and \eqref{ea-SNEcounter2} can be rewritten:
\begin{align}
&b_2w_2x_1\leq\t_1w_1(x_1-x_2)+rw_1x_2\,,\label{ea-ineq3}\\
&b_2w_2x_1\geq\t_2w_2(x_1-x_2)+rw_2x_2\label{ea-ineq4}\,.
\end{align}
We need the RHS of \eqref{ea-ineq3} to be at least as large as the RHS of
\eqref{ea-ineq4}. However, this
is not always the case. For example, suppose $(\t_1,w_1)=(1,0.7)$,
$(\t_2,w_2)=(0.6,1)$, $r=0.5$, and $(x_1,x_2)=(1,0.5)$. We find the
bounds on advertiser $2$'s bid to be $b_2\geq0.55$ and
$b_2\leq0.525$. Thus, a SNE under the standard ranking algorithm with a
reserve price does not necessarily rank ads by $\t_iw_i$.

\section{Proof of Lemma~\ref{l-yz}}
\label{a-yz}

\paragraph{Lemma \ref{l-yz}.}
\textit{Let $\arule^y$ and $\arule^z$ be two allocation rules such that
the following properties hold for all $\t_i,w_i,\t_j,w_j$:
\beq\label{ea-yz1}
y(\t_i,w_i)=0\quad\Leftrightarrow\quad z(\t_i,w_i)=0\,,
\eeq
%\vspace{-3mm}
\beq
\Bigl\{y(\t_i,w_i)> y(\t_j,w_j)\quad\text{AND}\quad z(\t_i,w_i)<
z(\t_j,w_j)\Bigr\}\quad\Rightarrow\quad\phi(\t_i)w_i>\phi(\t_j)w_j\,. \label{ea-yz2}
\eeq
Then, $R(\arule^y)\geq R(\arule^z)$. Furthermore, if $\arule^y$ and
$\arule^z$ differ with positive probability then $R(\arule^y)> R(\arule^z)$.}

\begin{proof}
Given a realisation $(\bt,\bw)$, suppose there are $k$ advertisers who
receive positive scores. Take the labelling of advertisers:
\begin{align}
&y(\t_1,w_1)> y(\t_2,w_2)>\cdots >y(\t_k,w_k)\nn\\
&\arule_1^y(\bt,\bw)\geq \arule_2^y(\bt,\bw)\geq \cdots\geq \arule_k^y(\bt,\bw)\,.\label{ea-yzx}
\end{align}
Recall that in Sect.~\ref{s-model} we make the assumption of strict
heterogeneity of slot effects ($s_1>s_2>\cdots$). We specify the weak
ordering in \eqref{ea-yzx} because the number of available slots may
be less than $k$. However, if advertiser $i$ receives a positive
allocation then the strict inequality $\arule_i^y>\arule_{i+1}^y$ holds.

From now on we will use the shorthand notation $y(\t_1,w_1)=y_1$,
$\arule_1^y(\bt,\bw)=\arule_1^y$ etc. Let $\G$ be the permutation of indices such
that
\begin{align*}
&z_{\G(1)}> z_{\G(2)}>\cdots> z_{\G(k)}\\
&\arule_{\G(1)}^z\geq \arule_{\G(2)}^z\geq \cdots\geq \arule_{\G(k)}^z\,.
\end{align*}
That is, if an advertiser has the $i$'th highest score w.r.t.\ $z$,
he has the $\G(i)$'th highest score w.r.t.\ $y$.

We need to
show that $\G$ can be reordered through a sequence of swaps, each of
which either satisfies the conditions of Lemma~\ref{l-swap}, or is a
trivial swap. Let $\clS$ be the set of inversions
\[
\clS=\bigl\{\bigl(\G(i),\G(j)\bigr):i<j\,\,\text{and}\,\,\G(i)>\G(j)\bigr\}\,.
\]
We use the fact that $\G$ (and $\G^{-1}$) can be decomposed
into a product of $|\clS|$ adjacent transpositions, where each
transposition resolves precisely one of the inversions in
$\clS$. Applying such a decomposition to the allocations $\arule^z$,
each non-trivial swap resolves some inversion
$\bigl(\G(i),\G(j)\bigr)$. Note that
\begin{itemize}
\item $z_{\G(i)}>z_{\G(j)}$ as $i<j$.
\item $y_{\G(i)}< y_{\G(j)}$ as $\G(i)>\G(j)$.
\item $\phi(\t_{\G(i)})w_{\G(i)}<\phi(\t_{\G(j)})w_{\G(j)}$ as
  \eqref{ea-yz2} holds.
\item $\arule_{\G(i)}> \arule_{\G(j)}$ as the inversion has not been
  previously resolved, and the swap is non-trivial.
\end{itemize}
By the repeated application of Lemma~\ref{l-swap}, given an arbitrary
realisation $(\bt,\bw)$ at which the allocation rules $\arule^y$ and
$\arule^y$ differ,
\[
\sum_{i=1}^n\phi(\t_i)w_i\arule_i^y(\bt,\bw)> \sum_{i=1}^n\phi(\t_i)w_i\arule_i^z(\bt,\bw)\,.
\]
This process can be applied to all realisations, showing that $R(\arule^y)\geq
R(\arule^z)$ pointwise. Furthermore, if $\arule^y$ and
$\arule^z$ differ with positive probability then $R(\arule^y)> R(\arule^z)$.
\qed\end{proof}

\section{Proof of Lemma~\ref{l-Rdom}}
\label{a-Rdom}

\paragraph{Lemma \ref{l-Rdom}.}
\textit{
Given a reserve price $r\in(0,\bart]$, let
\begin{align}
&y(\t_i,w_i)=(\t_i-r)^+w_i\,,\label{ea-Rdomy}\\
&z(\t_i,w_i)=\I\{\t_i\geq r\}\,\t_iw_i\,.\label{ea-Rdomz}
\end{align}
If
\beq\label{ea-Rdomr}
r\leq\inf_{t \geq r}\Bigl\{t-\frac{\phi(t)}{\phi'(t)}\Bigr\}\,,
\eeq
then $R(\arule^y)> R(\arule^z)$.}

\begin{proof}
Consider two advertisers $i$ and $j$ where $\t_i>\t_j\geq r$. Using the
shorthand
notation $\phi(\t_i)=\phi_i$, we write the ratio of their virtual
values as
\[
\frac{\phi_i}{\phi_j}=\frac{\t_i-g(\t_i,\t_j)}{\t_j-g(\t_i,\t_j)}\,,
\]
where
\beq\label{ea-Rdomg}
g(\t_i,\t_j)=\frac{\t_j\phi_i-\t_i\phi_j}{\phi_i-\phi_j}\,.
\eeq
Note that
\[
\frac{\partial}{\partial k}\Bigl(\frac{\t_i-k}{\t_j-k}\Bigr)=\frac{\t_i-\t_j}{(\t_j-k)^2}>0\,.
\]
If $r\leq g(\t_i,\t_j)$,
\[
\frac{\t_i}{\t_j}<\frac{\t_i-r}{\t_j-r}\leq\frac{\phi_i}{\phi_j}\,.
\]
In this case the following properties hold:
\beq\label{e-Rdomprop1}
\biggl\{\frac{\t_i-r}{\t_j-r}>\frac{w_j}{w_i}\quad\text{AND}\quad \frac{\t_i}{\t_j}<\frac{w_j}{w_i}\biggr\}\quad\Rightarrow\quad\frac{\phi_i}{\phi_j}>\frac{w_j}{w_i}\,,
\eeq
\vspace{-2mm}
\beq\label{e-Rdomprop2}
\frac{\t_i-r}{\t_j-r}<\frac{w_j}{w_i}\quad\Rightarrow\quad\frac{\t_i}{\t_j}<\frac{w_j}{w_i}\,.
\eeq
We would like to find under what conditions $r\leq g(\t_i,\t_j)$ for
all $\t_i>\t_j\geq r$.

Denote the upper bound of the range of $\t_i$ by $T$ (possibly
infinite), and consider the infimum of $g(\t_i,\t_j)$. We
argue that this must occur either at one of the limit
points as $\t_j\rightarrow\t_i$ for some $\t_i\in[r,T]$, or at
$(\t_i,\t_j)=(T,r)$.

Consider the
minimising value of $\t_i$ given a fixed $\t_j=t$. This is
  either (a) at $\t_i=T$, (b) at some $\t_i\in(t,T)$, or (c) at the
  limit as $\t_i\rightarrow t$. In case (b), we must have $\partial 
g/\partial\t_i=0$ as $g$ is continuous. From \eqref{ea-Rdomg}
this is equivalent to 
\[
\phi_i'=\frac{\phi_i-\phi(t)}{\t_i-t}\,,
\]
in which case $g$ can be rewritten as
\[
g(\t_i,t)=\t_i-\frac{\phi_i}{\phi_i'}=\lim_{\t_j\rightarrow\t_i}g(\t_i,\t_j)\,.
\]
Thus in both cases (b) and (c), the infimum of $g(\t_i,t)$ w.r.t.\ $t$ is some
limit point of $g(\t_i,\t_j)$ as $\t_j\rightarrow\t_i$.

A similar argument can be made regarding the
minimising value of $\t_j$ given a fixed $\t_i$, leading us to the
conclusion that the infimum value of $g(\t_i,\t_j)$ must occur either
(i) at one of the limit points as $\t_j\rightarrow\t_i$ or (ii) at
$(\t_i,\t_j)=(T,r)$. In case (ii)
\[
r\leq\bart\quad\Rightarrow\quad\phi(r)\leq0\quad\Rightarrow\quad r\leq
\inf \bigl\{g(\t_i,\t_j)\bigr\}\,.
\]
In case (i), we require condition \eqref{ea-Rdomr} to obtain
$r\leq\inf \bigl\{g(\t_i,\t_j)\bigr\}$.

Therefore, if \eqref{ea-Rdomr} holds then $r\leq g(\t_i,\t_j)$ for
all $\t_i>\t_j\geq r$. Then properties \eqref{e-Rdomprop1} and
\eqref{e-Rdomprop2} imply \eqref{ea-yz2} holds and we can invoke
Lemma~\ref{l-yz} to show  
$R(\arule^y)\geq R(\arule^z)$. Furthermore, as $\t_i$ has continuous
support over its range, the allocation rules $\arule^y$ and $\arule^z$
must differ with positive probability, implying $R(\arule^y)> R(\arule^z)$.
\qed\end{proof}

\section{Proof of Lemma~\ref{l-zSNE}}
\label{a-zSNE}

\paragraph{Lemma \ref{l-zSNE}.}
\textit{
Let $\arule$ be an allocation rule that selects a SNE of a
GSP auction with the ranking function $z(b,w)=\I\{b\geq
r\}\,bw$.  Then $R(\arule^z)\geq R(\arule)$.}

\begin{proof}
% As $\clA(\bm{b},\bw)=\bigl\{x_i(\cdot),p_i(\cdot)\bigr\}$ is the
% result of a GSP auction subject to the ranking function
% $z(b_i,w_i)=\I\{b_i\geq r\}\,b_iw_i$, we have
% \beq\label{ea-zSNE1}
% z(b_i,w_i)>z(b_j,w_j)\quad\Rightarrow\quad x_i(\bm{b},\bw)\geq
% x_j(\bm{b},\bw)\,,
% \eeq\vspace{-5mm}
% \beq\label{ea-zSNE2}
% p_i(\bm{b},\bw)=\max\bigl\{r,\tfrac{1}{w_i}\,z(b_{i+1},w_{i+1})\bigr\}\,.
% \eeq
Given a realisation $(\bt,\bw)$, suppose the allocation rule $\arule$ selects
the SNE in which advertiser $i$ bids $b_i(\bt,\bw)$. We make the
following intuitive assumptions about advertisers' bidding strategies:
\begin{enumerate}
%\item $\t_i<r\quad\Rightarrow\quad b_i=0$.\label{ass1}
\item $\t_i\geq r\quad\Rightarrow\quad r\leq b_i\leq\t_i$.\label{ass2}
\item $x_i=0\quad\Rightarrow\quad b_i=\t_i$.\label{ass3}
\end{enumerate}
%Assumption~\ref{ass1} makes sense as there is no motivation for an
%advertiser to submit a bid if she values a click less than the reserve
%PPC.
Assumption~\ref{ass2} makes sense as if $\t_i\geq r$, then
$b_i=r$ is a dominant strategy over $b_i<r$ and $b_i=\t_i$ is dominant
over $b_i>\t_i$. Assumption~\ref{ass3} is a little less intuitive, but
is a common concept in auction theory --- that is, any losing bidder
submits the maximum bid without exposing himself to the possibility of
a loss, which is clearly a (weakly) dominant strategy and further drives
competition in the auction. Assumptions~\ref{ass2} and
\ref{ass3} imply
\beq\label{ea-zSNEprop1}
z(\t_i,w_i)=0\quad\Leftrightarrow\quad z(b_i,w_i)=0\,.
\eeq
Suppose there are $k$ qualifying
advertisers ($\t_i,b_i\geq r$). Take the labelling of advertisers such that
\begin{align*}
&b_1w_1>b_2w_2>\cdots>b_kw_k\\
&x_1\geq x_2\geq \cdots\geq x_k\,.
\end{align*}
Consider any realisation at which the allocation rules $\arule$ and
$\arule^z$ differ. That is, there exists a pair of advertisers $j<i$
such that $b_jw_j>b_iw_i$, $\t_jw_j<\t_iw_i$, and $x_j>x_i$.
% \paragraph{Case 1} ($x_i=x_j>0$). As $\clA(\bm{b},\bw)$ describes an
% SNE, the following inequality must be satisfied:
% \beq\label{ea-zSNEineq}
% \Bigl(\t_j-\max\bigl\{r,b_{j+1}\tfrac{w_{j+1}}{w_j}\bigr\}\Bigr)x_j\geq
% \Bigl(\t_j-\max\bigl\{r,b_{j+1}\tfrac{w_{i+1}}{w_j}\bigr\}\Bigr)x_i\,,
% \eeq
% from which we get
% \[
% \max\bigl\{rw_j,b_{i+1}w_{i+1}\bigr\}\geq
% \max\bigl\{rw_j,b_{j+1}w_{j+1}\bigr\}\,.
% \]
% As $b_{i+1}w_{i+1}<b_{j+1}w_{j+1}$, we need both the LHS and RHS to be
% equal to $rw_j$, which implies
% \[
% rw_j\geq b_{j+1}w_{j+1} \geq b_iw_i \geq rw_i\,,
% \]
% as $j+1\leq i$ and $b_i\geq r$. Thus $w_j\geq w_i$.
From the SNE inequalities \eqref{ea-zSNEineq} we have 
\begin{align}\label{ea-zSNEineq2}
\bigl(\t_jw_j-\max\{rw_j,b_{j+1}w_{j+1}\}\bigr)x_j\geq
\bigl(\t_jw_j-\max\{rw_j,b_{i+1}w_{i+1}\}\bigr)x_i\\
\label{ea-zSNEineq3}
\bigl(\t_iw_i-\max\{rw_i,b_{j+1}w_{j+1}\}\bigr)x_j\leq
\bigl(\t_iw_i-\max\{rw_i,b_{i+1}w_{i+1}\}\bigr)x_i\,.
\end{align}
Taking \eqref{ea-zSNEineq3} away from \eqref{ea-zSNEineq2}:
\begin{align*}
(\t_jw_j-\t_iw_i)(x_j-x_i)\geq
&\max\bigl\{rw_i,b_{i+1}w_{i+1}\bigr\}-\max\bigl\{rw_i,b_{j+1}w_{j+1}\bigr\}\\
&+\max\bigl\{rw_j,b_{j+1}w_{j+1}\bigr\}-\max\bigl\{rw_j,b_{i+1}w_{i+1}\bigr\}\,.
\end{align*}
As $x_j>x_i$ and $\t_jw_j<\t_iw_i$, the LHS (and thus the RHS
also) is negative. In the interest of brevity, denote the four terms in the RHS by
$A$, $B$, $C$, and $D$ respectively. We know $C\geq D$ (as
$b_{j+1}w_{j+1}> b_{i+1}w_{i+1}$), thus it is necessary that
$A<B$. This implies $B=b_{j+1}w_{j+1}$ as $A\geq rw_i$. This implies
$C\geq B$, and thus it is necessary that $A<D$. This implies $D=rw_j$
as $A\geq b_{i+1}w_{i+1}$. Now we have
\begin{align*}
\text{RHS}&=\max\bigl\{rw_i,b_{i+1}w_{i+1}\bigr\}-b_{j+1}w_{j+1}+\max\bigl\{rw_j,b_{j+1}w_{j+1}\bigr\}-rw_j\\
&=\max\bigl\{rw_i,b_{i+1}w_{i+1}\bigr\}-\min\bigl\{rw_j,b_{j+1}w_{j+1}\bigr\}\,.
\end{align*}
For this to be negative, it is necessary that $rw_j>rw_i$
and thus $w_j>w_i$. As $\t_jw_j<\t_iw_i$, we need $\t_j<\t_i$. As the
hazard rate $f(\t_i)/(1-F(\t_i))$ is non-decreasing,
\begin{align*}
\frac{1-F(\t_i)}{f(\t_i)}&\leq \frac{1-F(\t_j)}{f(\t_j)}\\
\frac{1-F(\t_i)}{f(\t_i)}\,w_i&\leq \frac{1-F(\t_j)}{f(\t_j)}\,w_j\,.
\end{align*}
Taking this inequality away from $\t_iw_i>\t_jw_j$, we get
$\phi(\t_i)w_i>\phi(\t_j)w_j$. Thus,
\beq\label{ea-zSNEprop2}
\Bigl\{\t_iw_i> \t_jw_j\quad\text{AND}\quad
  b_iw_i< b_jw_j\Bigr\}\quad\Rightarrow\quad\phi(\t_i)w_i>\phi(\t_j)w_j\,.
\eeq
Note that \eqref{ea-zSNEprop2} coupled with \eqref{ea-zSNEprop1} closely
resemble the properties \eqref{ea-yz1} and \eqref{ea-yz2} required for
Lemma~\ref{l-yz}. Indeed, one can follow the same process described in
Appendix~\ref{a-yz} and show that $R(\arule^z)\geq R(\arule)$.
\qed\end{proof}

\end{document}